\let\myLambda\Lambda 
\renewcommand{\Lambda}{\mathrm{\myLambda}}
\let\myOmega\Omega 
\renewcommand{\Omega}{\mathrm{\myOmega}}
\let\myDelta\Delta
\renewcommand{\Delta}{\mathrm{\myDelta}}
\newcommand{\LCNOM}[1]{\dfrac{\sum\limits_{j \in \mathcal{Z}} \Big( a_j \prod \limits_{\substack{\ell \in \mathcal{Z}\\ \ell \neq j}} (1-x \alpha^{#1} \beta^\ell ) \Big)}{ \prod\limits_{j \in \mathcal{Z}} (1-x \alpha^{#1} \beta^j)}}
\newcolumntype{Z}{>{\centering\arraybackslash}X}
\newcommand{\F}[1]{\ensuremath{\mathbb{F}_{#1}}}
\newcommand{\Fq}{\F{q}}
\newcommand{\Fqn}[2]{\ensuremath{\mathbb{F}_{#1}^{#2}}}
\newcommand{\Fx}[1]{\ensuremath{\F{#1}[x]}}
\newcommand{\Fxq}{\Fx{q}}
\newcommand{\Z}[1]{\ensuremath{\mathbb{Z}_{#1}}} % ring
\newtheorem{corollary}{Corollary}
\newtheorem{definition}{Definition}
\newtheorem{theorem}{Theorem}
\newtheorem{lemma}{Lemma}
\newtheorem{proposition}{Proposition}
\newtheorem{example}{Example}
\newtheorem{remark}{Remark}
\DeclareMathOperator{\defi}{def}
\newcommand{\defeq}{\overset{\defi}{=}}
\newcommand{\defequiv}{\overset{\defi}{\equiv}}
\newcommand{\defset}[1]{\ensuremath{D_\mathcal{#1}}}
\newcommand{\refeq}[1]{(\ref{#1})}
\newcommand{\RS}[4]{\ensuremath{\mathcal{RS}(#1;#2,#3;#4)}}
\newcommand{\RSa}{\ensuremath{\mathcal{RS}}}
\newcommand{\CYC}[4]{\ensuremath{\mathcal{C}(#1;#2,#3,#4)}}
\newcommand{\CYCa}{\ensuremath{\mathcal{C}}}
\newcommand{\LC}[4]{\ensuremath{\mathcal{L}(#1;#2,#3,#4)}}
\newcommand{\LCa}{\ensuremath{\mathcal{L}}}
\newcommand{\LCn}{\ensuremath{n_\mathcal{\ell}}}
\newcommand{\LCk}{\ensuremath{k_\mathcal{\ell}}}
\newcommand{\LCd}{\ensuremath{d_\mathcal{\ell}}}
\newcommand{\LCq}{\ensuremath{q_\mathcal{\ell}}}
\newcommand{\LCs}{\ensuremath{s_\mathcal{\ell}}}
\newcommand{\LCextensionorder}{\ensuremath{u}}
\newcommand{\LCconst}{\ensuremath{e}}
\newcommand{\mulo}{\ensuremath{\mu}}
\newcommand{\SPC}[1]{\ensuremath{\mathcal{P}(#1,#1-1,2)}}
\newcommand{\SPCa}{\ensuremath{\mathcal{P}}}
\newcommand{\HTconsta}{\ensuremath{b_1}}
\newcommand{\HTconstb}{\ensuremath{b_2}}
\newcommand{\muHT}{\ensuremath{d_0}}
\newcommand{\nuHT}{\ensuremath{\nu}}
\newcommand{\HTa}{\ensuremath{m_1}}
\newcommand{\HTb}{\ensuremath{m_2}}
\newcommand{\COS}[2]{\ensuremath{M_{#1}^{(#2)}}}
\renewcommand{\vec}[1]{\ensuremath{\mathbf{#1}}}
\newcommand{\vecenum}[1]{\ensuremath{(#1_0 \ #1_1 \ \dots \ #1_{n-1})}}
\begin{document}

\title{A New Bound on the Minimum Distance of Cyclic Codes Using Small-Minimum-Distance Cyclic Codes}
%\subtitle{Do you have a subtitle?\\ If so, write it here}

%\titlerunning{The Non-Zero-Locator Code}        % if too long for running head

\author{Alexander Zeh and Sergey Bezzateev\footnote{Alexander Zeh is with the Institute of Communications Engineering, University of Ulm, Ulm, Germany and INRIA Saclay--\^{I}le-de-France, École Polytechnique ParisTech, Palaiseau Cedex, France. Sergey Bezzateev is with the Saint Petersburg State University of Airspace Instrumentation, St. Petersburg, Russia, Email: \href{mailto:alexander.zeh@uni-ulm.de}{\texttt{alexander.zeh@uni-ulm.de}}, \href{mailto:bsv@aanet.ru}{\texttt{bsv@aanet.ru}}. The material in this contribution was presented in part to the IEEE International Symposium on Information Theory (ISIT 2012) in Boston, USA \cite{ZehBezzateev_DescribingACyclicCodeByAnotherCyclicCode_2012}.}}

%\authorrunning{Short form of author list} % if too long for running head
%\date{Received: date / Accepted: date}
% The correct dates will be entered by the editor

\maketitle

\subsubsection*{Abstract}
A new bound on the minimum distance of $q$-ary cyclic codes is proposed. It is based on the description by another
cyclic code with small minimum distance. The connection to the BCH bound and the Hartmann--Tzeng (HT) bound is formulated explicitly. We show that for many cases our approach improves the HT bound. Furthermore, we refine our bound for several families of cyclic codes.

We define syndromes and formulate a Key Equation that allows an efficient decoding up to our bound with the Extended Euclidean Algorithm.
It turns out that lowest-code-rate cyclic codes with small minimum distances are useful for our approach. Therefore, we give a sufficient condition for binary cyclic codes of arbitrary length to have minimum distance two or three and lowest code-rate.\\[0.1cm]

%\keywords{BCH Bound \and Bound on the Minimum Distance \and Cyclic Code \and Decoding \and Hartmann--Tzeng Bound}
\textbf{Keywords: BCH Bound - Bound on the Minimum Distance - Cyclic Code - Decoding - Hartmann--Tzeng Bound}\\[0.1cm]
% \PACS{PACS code1 \and PACS code2 \and more}
% \subclass{MSC code1 \and MSC code2 \and more}
%\end{abstract}

\textbf{Mathematics Subject Classification: 94A24 - 94A55 - 94B15 - 94B35}

\section{Introduction} \label{sec_intro}
In this paper, we introduce a technique that uses an $(\LCn,\LCk)$ $\LCq$-ary cyclic code $\LCa$ with minimum distance $\LCd$ to bound the minimum distance $d$ of another $(n,k)$ $q$-ary cyclic code $\CYCa$.
The descriptive cyclic code $\LCa$ is called non-zero-locator code. It turns out that the non-zero-locator code gives a good lower bound $d^{\ast}$ on the minimum distance $d$ of the described cyclic code $\CYCa$ if the code-rate $\LCk/\LCn$ of $\LCa$ is low and its minimum distance $\LCd$ is relatively small.

The algebraic relation between the cyclic non-zero-locator code $\LCa$ and the cyclic code $\CYCa$ provides the formulation
of syndromes and a Key Equation that allows an efficient decoding up to $\lfloor (d^{\ast}-1)/2 \rfloor$ errors with the Extended Euclidean Algorithm (EEA).

We give an explicit relation of $d^{\ast}$ to the BCH bound~\cite{Hocquenghem_1959,Bose_RayChaudhuri_1960} and its generalization: the Hartmann--Tzeng (HT) bound~\cite{Hartmann_DecodingBeyondBCHbound_1972,Hartmann_GeneralizationsofBCHbound_1972,Hartmann_DecodingBeyondBCHBound_1974}. In many cases our bound is better than the HT bound, although our approach is not a generalization of the HT bound as the Roos bound~\cite{Roos_GeneralBCHBound_1982,Roos_BoundforCyclicCodes_1983} and the bound of van Lint and Wilson~\cite{vanLint_OnTheMinimumDistance_1986} are.

In our previous work~\cite{ZehWachterBezza_EfficientDecodingOfSomeClassesArxiv_2011} we associated rational functions with a subset of the defining set of a given cyclic code $\CYCa$. This can be seen as a special case of the presented approach.
The main advantage of this contribution is that we can express the bound on the minimum distance of a given cyclic code $\CYCa$ in terms of properties of the associated cyclic non-zero-locator code $\LCa$.

This paper is organized as follows. In Section~\ref{sec_preliminaries}, we give necessary preliminaries of cyclic codes, the HT bound and recall the definition of cyclic Reed--Solomon (RS) codes, which we use later as non-zero-locator code.
The concept of the non-zero-locator code is introduced in Section~\ref{sec_loccode} and the main theorem on the minimum distance is proven. 
The connection to the Hartmann--Tzeng bound is given in Section~\ref{sec_LocatorCodes}. Furthermore, several families of cyclic codes are identified. We give sufficient conditions for binary cyclic codes with minimum distance two and three and lowest code-rate in Section~\ref{sec_badcyclic}. A generalized syndrome definition, Key Equation and Forney's formula are given in Section~\ref{sec_decoding}. Section~\ref{sec_conclusion} concludes this contribution. 

\section{Preliminaries} \label{sec_preliminaries}
Let $q$ be a power of a prime and let \F{q} denote the finite field of order $q$ and \Fxq{} the set of all univariate polynomials with coefficients in \Fq{} and indeterminate $x$.
A $q$-ary cyclic code over \Fq{} of length $n$, dimension $k$ and minimum distance $d$ is
denoted by $\CYC{q}{n}{k}{d} \subset \Fqn{q}{n}$ and it is an ideal in the ring $\Fxq{}/(x^n-1)$ generated by $g(x)$. A codeword $\vec{c} =\vecenum{c} \in \CYCa$
is associated with a polynomial $c(x) = \sum_{i=0}^{n-1} c_i x^i \in \Fxq{}$, where
$g(x)$ divides $c(x)$. We assume that $x^n-1$ has $n$ different roots. Let $\F{q^s}$ be an extension field of \Fq{} and let $\alpha \in \F{q^s}$ be a primitive $n$th root of unity.
The cyclotomic coset \COS{r}{n} modulo $n$ over \F{q} is denoted by:
\begin{equation*} \label{eq_cyclotomiccoset}
 \COS{r}{n} = \lbrace rq^j \bmod n \, \vert \, j = 0,1,\dots,n_r-1 \rbrace,
\end{equation*}
where $n_r$ is the smallest integer such that $rq^{n_r} \equiv
r \mod n$. It is well--known that the minimal polynomial $\COS{r}{n}(x) \in \Fxq{}$ of the element $\alpha^r$
is given by:
\begin{equation*} \label{eq_minpoly}
 \COS{r}{n}(x) = \prod_{i \in \COS{r}{n}} (x-\alpha^i).
\end{equation*}
The defining set $\defset{\CYCa}$ of a $q$-ary cyclic code
\CYC{q}{n}{k}{d} is the set of zeros of the generator
polynomial $g(x) \in \Fxq$ and can be partitioned into $m$ cyclotomic
cosets:
\begin{equation*} \label{eq_definingset}
\begin{split}
\defset{\CYCa} & =  \{0 \leq i \leq n-1 \, | \, g(\alpha^i)=0 \} = \COS{r_1}{n} \cup \COS{r_2}{n} \cup \dots \cup \COS{r_{m}}{n}.
\end{split}
\end{equation*}
Hence, the generator polynomial $g(x)$ of degree $n-k$ of \CYC{q}{n}{k}{d} is
\begin{equation*}
 g(x) = \prod_{i=1}^{m} \COS{r_i}{n}(x).
\end{equation*}
Let us recall a well-known bound on the minimum distance of cyclic codes.
\begin{theorem}[Hartmann--Tzeng (HT) Bound,~\cite{Hartmann_GeneralizationsofBCHbound_1972}] \label{theo_HT}
Let a $q$-ary cyclic code $\CYC{q}{n}{k}{d}$ with defining set
$D_{\mathcal{C}}$ be given.
Suppose there exist the integers $\HTconsta$, $\HTa$ and $\HTb$ with $\gcd(n,\HTa) = 1$ and $\gcd(n,\HTb) = 1$ such that
\begin{equation*} \label{eq_HTBound}
 \{ \HTconsta + i_1\HTa+i_2\HTb \mid 0 \leq i_1 \leq \muHT - 2,\; 0 \leq i_2 \leq \nuHT \} \subseteq \defset{\CYCa}.
\end{equation*}
Then $d \geq \muHT+\nuHT$.
\end{theorem}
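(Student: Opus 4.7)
The plan is to argue by contradiction: assume there is a nonzero codeword $c(x) \in \CYCa$ of Hamming weight $w$ with $w \le \muHT + \nuHT - 1$. Let $X_1, \dots, X_w \in \F{q^s}$ be the locators $\alpha^i$ at the nonzero positions of $c(x)$, and let $Y_1, \dots, Y_w$ be the associated nonzero coefficients. Because every element of $\defset{\CYCa}$ is a root of $c(x)$, we obtain the family of relations
\begin{equation*}
\sum_{l=1}^{w} Y_l\, X_l^{\HTconsta + i_1 \HTa + i_2 \HTb} = 0, \qquad 0 \le i_1 \le \muHT - 2,\ 0 \le i_2 \le \nuHT.
\end{equation*}
Setting $U_l := X_l^{\HTa}$ and $V_l := X_l^{\HTb}$ and forming linear combinations of these $(\muHT-1)(\nuHT+1)$ identities, it follows that for every pair of univariate polynomials $f, g$ with $\deg f \le \muHT - 2$ and $\deg g \le \nuHT$,
\begin{equation*}
\sum_{l=1}^{w} Y_l\, X_l^{\HTconsta}\, f(U_l)\, g(V_l) = 0.
\end{equation*}

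The key structural input is the coprimality pair $\gcd(n,\HTa) = \gcd(n,\HTb) = 1$: both maps $X \mapsto X^{\HTa}$ and $X \mapsto X^{\HTb}$ permute the group of $n$th roots of unity, so the elements $U_1, \dots, U_w$ are pairwise distinct and, separately, so are $V_1, \dots, V_w$. This distinctness is precisely what enables the interpolation step that follows.

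To finish, I fix any index $l_0 \in \{1, \dots, w\}$. The assumption $w - 1 \le (\muHT - 2) + \nuHT$ guarantees that $\{1, \dots, w\} \setminus \{l_0\}$ can be written as a disjoint union $L_1 \cup L_2$ with $|L_1| \le \muHT - 2$ and $|L_2| \le \nuHT$. Choosing
\begin{equation*}
f(x) := \prod_{l \in L_1} (x - U_l), \qquad g(y) := \prod_{l \in L_2} (y - V_l),
\end{equation*}
the product $f(U_l)\, g(V_l)$ vanishes for every $l \neq l_0$, while pairwise distinctness forces $f(U_{l_0})\, g(V_{l_0}) \neq 0$. The displayed identity therefore collapses to $Y_{l_0}\, X_{l_0}^{\HTconsta}\, f(U_{l_0})\, g(V_{l_0}) = 0$, which forces $Y_{l_0} = 0$ and contradicts that $l_0$ was in the support of $c(x)$.

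The main obstacle, as I see it, is recognizing that both coprimality hypotheses must be used essentially: without either one, some of the $U_l$ or $V_l$ would coincide and the annihilating polynomial would fail to isolate $l_0$. The combinatorial partition saturates exactly at $w = \muHT + \nuHT - 1$, matching the claimed distance $\muHT + \nuHT$, which suggests this style of two-parameter annihilation argument is sharp and cannot be pushed further without extra hypotheses—consistent with the authors' later comment that their own bound strictly improves upon the HT bound via a fundamentally different mechanism.
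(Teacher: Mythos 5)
Your proof is correct. However, note that the paper does not actually prove Theorem~\ref{theo_HT}: it states the HT bound as a cited result from Hartmann and Tzeng, so there is no in-paper proof to compare against. What you have written is the standard two-parameter ``annihilating polynomial'' (shifting-method) argument in the spirit of van Lint and Wilson: the substitutions $U_l = X_l^{\HTa}$, $V_l = X_l^{\HTb}$, the linearity step producing $\sum_l Y_l X_l^{\HTconsta} f(U_l) g(V_l) = 0$, and the partition $\{1,\dots,w\}\setminus\{l_0\} = L_1 \cup L_2$ with $|L_1|\le \muHT-2$, $|L_2|\le\nuHT$ are all handled correctly, and you invoke both coprimality hypotheses at exactly the point where they are needed (to ensure the $U_l$ are pairwise distinct and the $V_l$ are pairwise distinct, so that $f(U_{l_0})\,g(V_{l_0})\neq 0$). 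The original Hartmann--Tzeng proof is organized differently, via a rank/determinant argument on syndrome matrices built from multiple consecutive-zero runs; your version is the cleaner modern reformulation and establishes the same bound. The only imprecision is in the closing aside: the paper's bound is not uniformly stronger than HT and is not a generalization of it (unlike the Roos and van Lint--Wilson bounds), so ``strictly improves'' overstates the relationship stated in Proposition~\ref{prop_OurBoundBetterThanHT}; but this is commentary, not part of the argument.
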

Note that for $\nuHT =0$ the HT bound becomes the BCH bound~\cite{Hocquenghem_1959, Bose_RayChaudhuri_1960}.
Further generalizations were proposed by Roos~\cite{Roos_GeneralBCHBound_1982,Roos_BoundforCyclicCodes_1983} and van Lint and Wilson~\cite{vanLint_OnTheMinimumDistance_1986}.
Decoding up to the HT bound and to some particular cases of the Roos bound was formulated by Feng and Tzeng~\cite[Section VI]{Feng-Tzeng:IEEE_IT1991}.

We consider cyclic Reed--Solomon (RS) codes~\cite{ReedSolomon_PolynomialCodesOverCertainFiniteFields_1960} for our approach and therefore recapitulate their definition in the following.
\begin{definition}[Cyclic Reed--Solomon Code] \label{def_CyclicRS}
Let $n$ be an integer dividing $q-1$ and let $\alpha$ denote an element of multiplicative order $n$ in \Fq{}.
Let $\delta$ be an integer.
Furthermore, let the generator polynomial $g_{\delta}(x) \in \Fxq$ be defined as:
\begin{equation*} \label{eq_GenPolyRS}
g_{\delta}(x) = \prod\limits_{i=\delta}^{\delta+n-k-1} (x-\alpha^i).
\end{equation*}
Then a cyclic Reed--Solomon code over $\Fq{}$ of length $n | (q-1)$ and dimension $k$, denoted by $\RS{q}{n}{k}{\delta}$, is defined by:
\begin{equation} \label{eq_DefGenPolyRSCode}
\RS{q}{n}{k}{\delta} = \{ m(x) g_{\delta}(x) :  \deg m(x) < k \}.
\end{equation}
\end{definition}
RS codes are maximum distance separable codes and their minimum distance $d$ is $d=n-k+1$.

\section{The Non-Zero-Locator Code} \label{sec_loccode}
We relate another cyclic code --- the so-called non-zero-locator code $\LCa{}$ --- to a given cyclic code $\CYCa{}$.
In the following, we connect a infinite sequence of an evaluated polynomial $c(x) \in \CYCa$ to a sum of fractions. This allows to draw the relation to our previous approach~\cite{ZehWachterBezza_EfficientDecodingOfSomeClassesArxiv_2011}. Furthermore, we can use familiar properties of cyclic codes rather than abstract properties of rational functions. The obtained bound can be expressed in terms of parameters of the associated non-zero-locator code $\LCa$.

Let $c(x)$ be a codeword of a given $q$-ary cyclic code \CYC{q}{n}{k}{d} and let $\mathcal{Y}$ denote the set of indexes of non-zero coefficients of $c(x)$
\begin{equation*}
c(x) = \sum\limits_{i \in \mathcal{Y}} c_{i} x^{i}.
\end{equation*}
Let $\alpha \in \F{q^s}$ be an element of order $n$. Then we have the following
relation for all $c(x) \in \CYC{q}{n}{k}{d}$:
\begin{align} \label{eq_PowerSeriesCyclicCode}
 \sum\limits_{j=0}^{\infty} c(\alpha^j) x^j & = \sum\limits_{j=0}^{\infty} \sum\limits_{i \in \mathcal{Y}} c_i \alpha^{ji}x^j \nonumber \\
 & = \sum\limits_{j=0}^{\infty} \sum\limits_{i \in \mathcal{Y}} c_i (\alpha^{i}x)^j \nonumber \\
  & = \sum\limits_{i \in \mathcal{Y}} \sum\limits_{j=0}^{\infty} c_i (\alpha^{i}x)^j \nonumber \\
 & = \sum\limits_{i \in \mathcal{Y}} \frac{c_i}{1-x \alpha^i}.
\end{align}
%\begin{multline} \label{eq_PowerSeriesCyclicCode}
% \sum\limits_{j=0}^{\infty} c(\alpha^j) x^j = \sum\limits_{j=0}^{\infty} \sum\limits_{i \in \mathcal{Y}} c_i \alpha^{ji}x^j \nonumber \\
% = \sum\limits_{j=0}^{\infty} \sum\limits_{i \in \mathcal{Y}} c_i (\alpha^{i}x)^j \nonumber \\
% = \sum\limits_{i \in \mathcal{Y}} \sum\limits_{j=0}^{\infty} c_i (\alpha^{i}x)^j \nonumber \\
% = \sum\limits_{i \in \mathcal{Y}} \frac{c_i}{1-x \alpha^i}.
%\end{multline}
Now, we can define the non-zero-locator code.
\begin{definition}[Non-Zero-Locator Code] \label{def_locatorcode}
Let a $q$-ary cyclic code \CYC{q}{n}{k}{d} be given. Let \F{q^s} contain the $n$th roots of unity. Let $\gcd(n,\LCn) = 1$ and let 
$\F{\LCq} = \F{q^\LCextensionorder}$ be an extension field of \F{q}. Let \F{\LCq^{\LCs}} contain the $\LCn$th roots of unity. Let $\alpha \in \F{q^s}$ be an element of order $n$ and let $\beta \in \F{\LCq^{\LCs}}$ be an element of order $\LCn$.

Then \LC{\LCq}{\LCn}{\LCk}{\LCd} is a non-zero-locator code of $\CYCa$ if there exists a $\mu \geq 2$ and an integer $\LCconst$, such that $\forall \, a(x) \in \LCa$ and $\forall \, c(x) \in \CYCa$:
\begin{equation} \label{eq_locatorandcodeword}
\sum_{j=0}^{\infty} c(\alpha^{j+\LCconst}) a(\beta^j) x^j \equiv 0 \bmod x^{\mulo-1},
\end{equation}
holds.
\end{definition}
\begin{remark}
Let $r$ denote the least common multiple of $s$ and $\LCextensionorder \cdot \LCs$ and let $\gamma$ be a primitive element
in $\F{q^{r}}$. Then $\gamma^{(q^{r}-1)/n}$ and $\gamma^{(q^{r}-1)/\LCn}$ are elements of order $n$ and $\LCn$.
\end{remark}
Before we prove the main theorem on the minimum distance $d$ of the given cyclic code $\CYCa$, we describe Definition~\ref{def_locatorcode}.
We search the ``longest'' sequence
\begin{equation*}
c(\alpha^\LCconst) a(\beta^0), c(\alpha^{\LCconst+1}) a(\beta^1), \dots , c(\alpha^{\LCconst+\mulo-2}) a(\beta^{\mulo-2}),
\end{equation*}
that results in a zero-sequence of length $\mulo-1$, i.e., the product of the evaluated codeword $a(\beta^j)$ of the non-zero-locator code $\LCa$ and the evaluated codeword $c(\alpha^{j+\LCconst})$ of $\CYCa$ gives zero for all $j=0,\dots,\mulo-2$. Let us study the following example of a binary cyclic code.
\begin{example}[Binary Code of length $n=21$~\cite{Roos_BoundforCyclicCodes_1983, vanLint_OnTheMinimumDistance_1986}]
\label{ex_Binary21a} Let the binary cyclic code \CYC{2}{21}{7}{8} with generator polynomial $g(x)$
\begin{equation*}
g(x) =  \COS{1}{21}(x) \cdot  \COS{3}{21}(x) \cdot \COS{7}{21}(x) \cdot \COS{9}{21}(x)
\end{equation*}
be given. Let $\alpha \in \F{2^6}$ denote an element of order $21$. \\
The defining set $\defset{\CYCa} = \COS{1}{21} \cup \COS{3}{21} \cup \COS{7}{21} \cup \COS{9}{21}$ of \CYC{2}{21}{7}{8} is
\begin{equation*}
\begin{split}
\defset{\CYCa} = \{ 1,2,3,4,\square,6,7,8,9,\square,11,12,\square,14,15,16,\square,18 \},
 \end{split}
\end{equation*}
where the symbol $\square$ marks the indexes where $g(\alpha^i) \neq 0$.

We associate a single parity check code of length $\LCn = 5$, dimension $\LCk=4$ and minimum distance $\LCd = 2$ over $\F{2}$ as non-zero-locator code for \CYC{2}{21}{7}{8} according to Definition~\ref{def_locatorcode}. Let $\beta \in \F{2^4}$ be an element of order $5$ and let $g(x) = x-1$ be the generator polynomial of \LCa{}.
The defining sets $\defset{\CYCa}$ of \CYC{2}{21}{7}{8} and $\defset{\LCa}$ of \LC{2}{5}{4}{2} are listed in Table~\ref{tab_Ex21RS}. The corresponding product gives the a zero-sequence of length $\mulo-1 = 13$ for $\LCconst=0$.
A codeword $a(x) \in \LC{2}{5}{4}{2}$ ``fills'' the missing zeros of \CYC{2}{21}{7}{8} at position $0$, $5$ and $10$ in the interval $\left[0,12\right]$.

\begin{table}[htbp]
\centering
\caption{Defining sets $\defset{\CYCa}$ and $\defset{\LCa}$ of the binary cyclic code \CYC{2}{21}{7}{8} of Example~\ref{ex_Binary21a} and its non-zero-locator code \LC{2}{5}{4}{2} in the interval $\left[0,12\right]$.}
\label{tab_Ex21RS}
\begin{tabularx}{.95\columnwidth}{l|ZZZZZ;{2pt/2pt}ZZZZZ;{2pt/2pt}ZZZ}
% & 0 & 1 & 2 & 3 & 4 & 5 & 6 & 7 & 8 & 9 & 10 & 11 & 12 \\ \hline
$\defset{\CYCa}$ & $\square$ & 1 & 2 & 3 & 4 & $\square$ & 6 & 7 & 8 & 9 & $\square$ & 11 & 12 \\ 
$\defset{\LCa}$ & 0 & $\square$ & $\square$ & $\square$ & $\square$ & 0 & $\square$ & $\square$ & $\square$ & $\square$ & 0 & $\square$ & $\square$ \\ 
\end{tabularx}
\end{table}
\end{example}
We require a zero $\beta^j$ of the generator polynomial of the  non-zero-locator code $\LCa$ at the position $j$ where the generator polynomial of the given cyclic code $\CYCa$ has no zero.

Furthermore, we require $\gcd(n,\LCn) = 1$ to guarantee that
\begin{equation*} \label{eq_coprime}
 \gcd \Big( \prod_{m \in \mathcal{Z}}  (1-x\alpha^i\beta^m), \prod_{m \in \mathcal{Z}} (1-x\alpha^j\beta^m) \Big) = 1 \quad \forall i \ \text{and} \ \forall j \neq i,
\end{equation*}
which we use for the degree calculation in the following. For the proof we refer to Lemma~\ref{lemma_appendixcoprimality} in the Appendix.

We rewrite~\refeq{eq_locatorandcodeword} of Definition~\ref{def_locatorcode} more explicitly. With $c(x) = \sum_{i \in \mathcal{Y}} c_{i} x^{i}$ and
 $a(x) = \sum_{j \in \mathcal{Z}} a_{j} x^{j}$, we obtain:
\begin{align*} %\label{eq_locatorandcodeword2}
\sum_{j=0}^{\infty}  c(\alpha^{j+\LCconst}) a(\beta^j) x^j & = \sum\limits_{j=0}^{\infty} \sum\limits_{i \in \mathcal{Y}} c_i \alpha^{i(j+\LCconst)} a(\beta^j)x^j \\
& =  \sum\limits_{i \in \mathcal{Y}} c_i \alpha^{i\LCconst} \sum\limits_{j=0}^{\infty} \alpha^{ij} a(\beta^j)x^j.
\end{align*}
Using~\refeq{eq_PowerSeriesCyclicCode} for the codeword $a(x)$ of the associated non-zero-locator code leads to:
\begin{align} \label{eq_locatorandcodeword3}
\sum\limits_{i \in \mathcal{Y}} c_i \alpha^{i\LCconst} \sum\limits_{j=0}^{\infty} \alpha^{ij} a(\beta^j)x^j & =  \sum\limits_{i \in \mathcal{Y}} c_i \alpha^{i\LCconst} \sum\limits_{j \in \mathcal{Z}} \frac{a_j}{1-x\alpha^i \beta^j} \nonumber \\
& = \sum \limits_{i \in \mathcal{Y}} c_i \alpha^{i\LCconst} \LCNOM{i}.
\end{align}
Finally using~\refeq{eq_locatorandcodeword3} we can rewrite~\refeq{eq_locatorandcodeword} of Definition~\ref{def_locatorcode} in the following form:
\begin{align} \label{eq_locatorandcodeword5}
\dfrac{\sum\limits_{i \in \mathcal{Y}} \Big( c_i \alpha^{i\LCconst} \sum\limits_{j \in \mathcal{Z}} \Big(a_j \prod\limits_{\substack{\ell \in \mathcal{Z}\\ \ell \neq j}} (1-x\alpha^i \beta^{\ell}) \Big) \prod\limits_{\substack{m \in \mathcal{Y}\\ m \neq i}} \prod\limits_{s \in \mathcal{Z}} (1-x\alpha^m \beta^s) \Big)  }{\prod\limits_{i \in \mathcal{Y}} \big( \prod\limits_{j \in \mathcal{Z}} (1-x\alpha^i \beta^j) \big)} & \equiv 0 \bmod x^{\mulo-1},
\end{align}
where the degree of the denominator is $| \mathcal{Y} | \cdot | \mathcal{Z} |$. The degree of the numerator is smaller than or equal to $(|\mathcal{Y}|-1) \cdot | \mathcal{Z}| +  | \mathcal{Z}| -1 = |\mathcal{Y}| \cdot |\mathcal{Z}| -1$.

This leads to the following theorem on the minimum distance of a cyclic code $\CYCa$.
\begin{theorem}[Minimum Distance] \label{theo_minimumdistance}
Let a $q$-ary cyclic code \CYC{q}{n}{k}{d} and its associated non-zero-locator code \LC{\LCq}{\LCn}{\LCk}{\LCd} with $\gcd(n,\LCn) = 1$ and the integer $\mu$ be given as in Definition~\ref{def_locatorcode}. Then the minimum
distance $d$ of \CYC{q}{n}{k}{d} satisfies the following inequality:
\begin{equation} \label{inequality_for_d}
%d \geq d^{\ast} \defeq \left \lceil \frac{\mulo-1-\Delta}{\LCd}+1 \right \rceil, \; \text{where} \; \Delta < \LCd.
d \geq d^{\ast} \defeq \left \lceil \frac{\mulo}{\LCd} \right \rceil.
\end{equation}
\end{theorem}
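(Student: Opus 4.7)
The plan is to fix an arbitrary non-zero codeword $c(x) \in \CYCa$ with support $\mathcal{Y}$, choose $a(x) \in \LCa$ to be a \emph{minimum-weight} codeword with support $\mathcal{Z}$ so that $|\mathcal{Z}| = \LCd$, and then read off a lower bound on the Hamming weight $|\mathcal{Y}|$ of $c(x)$ from equation~(\ref{eq_locatorandcodeword5}). Because this bound will hold for every non-zero $c(x)$, it will bound $d$ as well.

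First I would observe that the denominator in~(\ref{eq_locatorandcodeword5}) evaluates to $1$ at $x=0$ and thus is a unit in the power series ring $\F{q^r}[\![x]\!]$. Consequently, the congruence $\equiv 0 \bmod x^{\mulo-1}$ is equivalent to $x^{\mulo-1}$ dividing the numerator polynomial $N(x)$. Since the degree of $N(x)$ is at most $|\mathcal{Y}|\cdot|\mathcal{Z}|-1$ as stated right after~(\ref{eq_locatorandcodeword5}), \emph{provided} $N(x)$ is not the zero polynomial we immediately obtain
\begin{equation*}
\mulo - 1 \;\leq\; \deg N(x) \;\leq\; |\mathcal{Y}|\cdot|\mathcal{Z}| - 1 \;=\; |\mathcal{Y}|\cdot\LCd - 1,
\end{equation*}
whence $|\mathcal{Y}| \geq \mulo/\LCd$, and integrality gives $|\mathcal{Y}| \geq \lceil \mulo/\LCd\rceil$, proving~(\ref{inequality_for_d}).

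The crux is therefore to verify that $N(x)\not\equiv 0$ whenever $c(x)\neq 0$, and this is where $\gcd(n,\LCn)=1$ will do the work (through Lemma~\ref{lemma_appendixcoprimality}). My plan is to pick any $i_0 \in \mathcal{Y}$ and any $j_0 \in \mathcal{Z}$ and evaluate $N$ at $x_0 = \alpha^{-i_0}\beta^{-j_0}$. For every index $i\in\mathcal{Y}$ with $i\neq i_0$, the outer product $\prod_{m\neq i}\prod_{s\in\mathcal{Z}}(1-x_0\alpha^m\beta^s)$ contains the factor for $(m,s)=(i_0,j_0)$, which vanishes at $x_0$; hence all summands with $i\neq i_0$ contribute zero. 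For $i=i_0$, the inner sum $\sum_{j\in\mathcal{Z}} a_j\prod_{\ell\neq j}(1-x_0\alpha^{i_0}\beta^\ell)=\sum_{j\in\mathcal{Z}} a_j\prod_{\ell\neq j}(1-\beta^{\ell-j_0})$ retains only the $j=j_0$ term, and the coprimality $\gcd(n,\LCn)=1$ is exactly what guarantees that none of the other surviving factors $(1-\beta^{\ell-j_0})$ and $(1-x_0\alpha^m\beta^s)$ with $m\neq i_0$ vanishes. Putting this together, $N(x_0)=c_{i_0}\alpha^{i_0\LCconst}a_{j_0}\prod_{\ell\neq j_0}(1-\beta^{\ell-j_0})\prod_{m\neq i_0}\prod_s(1-x_0\alpha^m\beta^s)$, which is non-zero because $i_0\in\mathcal{Y}$ and $j_0\in\mathcal{Z}$ force $c_{i_0}\neq 0$ and $a_{j_0}\neq 0$.

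The main obstacle is this non-vanishing argument: one has to check carefully that none of the factors outside the designed zero can conspire to vanish at $x_0$, and it is precisely the assumption $\gcd(n,\LCn)=1$ (already isolated as Lemma~\ref{lemma_appendixcoprimality}) that separates the ``$\alpha$-orbit'' from the ``$\beta$-orbit'' and makes the evaluation clean. Once $N\not\equiv 0$ is in hand, the degree comparison above finishes the proof, and selecting $a(x)$ to have minimum weight $\LCd$ yields the tightest bound, namely $\lceil\mulo/\LCd\rceil$.
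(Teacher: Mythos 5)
Your proof is correct and follows essentially the same route as the paper's: starting from the rational-function form~\refeq{eq_locatorandcodeword5}, observe that the denominator is a unit in the power series ring $\F{q^r}[[x]]$, so the congruence forces $x^{\mulo-1}$ to divide the numerator polynomial $N(x)$, and then compare degrees. The one place where you go beyond the paper's two-line argument is in explicitly verifying that $N(x)$ is not the zero polynomial. The paper simply asserts that the degree of the numerator ``has to be greater than or equal to $\mulo-1$,'' which tacitly presumes $N(x)\neq 0$; if $N(x)$ vanished identically, no degree bound would follow. Your evaluation at $x_0=\alpha^{-i_0}\beta^{-j_0}$ for a chosen $i_0\in\mathcal{Y}$, $j_0\in\mathcal{Z}$ supplies this cleanly: the outer product kills every summand with $i\neq i_0$, the inner sum collapses to the $j=j_0$ term, and $\gcd(n,\LCn)=1$ via Lemma~\ref{lemma_appendixcoprimality} rules out any spurious vanishing among the surviving factors, giving $N(x_0)\neq 0$. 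Taking $c(x)$ to be an arbitrary non-zero codeword (rather than one of weight exactly $d$) together with a minimum-weight $a(x)$ is also fine, and arguably slightly cleaner since it shows the bound $\lceil\mulo/\LCd\rceil$ constrains the weight of \emph{every} non-zero codeword. In short: same approach, with a welcome verification of a step the paper leaves implicit.
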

\begin{proof}
For a codeword $c(x) \in \CYC{q}{n}{k}{d}$ of weight $d$ and a codeword $a(x) \in \LC{\LCq}{\LCn}{\LCk}{\LCd}$ of weight $\LCd$, the degree of the denominator in~\refeq{eq_locatorandcodeword5} is $d \cdot \LCd$. The numerator has degree at most $d \cdot \LCd-1$, and has to be greater than or equal to $\mulo-1$.
%\qed
\end{proof}

\begin{example}[Binary Code of length $n=21$] 
Let us again consider the binary code \CYC{2}{21}{7}{8} of Example~\ref{ex_Binary21a}. We have $\mu -1 = 13$ according to Theorem~\ref{theo_minimumdistance}, so $d^{\ast} = \lceil 14/2 \rceil = 7$.

The HT bound (Theorem~\ref{theo_HT}) gives also $d \geq 6$ (with parameters $\HTconsta=1$, $\HTa=5$,  $\HTb=1$, $\muHT=5$ and $\nuHT=1$). The Roos bound gives $d \geq 8$~\cite[Example 1]{vanLint_OnTheMinimumDistance_1986}, which is the actual minimum distance of $\CYC{2}{21}{7}{8}$.
\end{example}
The optimal non-zero-locator code \LCa{} for a given cyclic code gives a zero sequence
\begin{equation*}
c(\alpha^{\LCconst}) a(\beta^0) , c(\alpha^{\LCconst+1}) a(\beta^{1}) , \dots , c(\alpha^{\LCconst+\mulo-2}) a(\beta^{\mulo-2})
\end{equation*}
of length $\mulo-1$ as in Definition~\ref{def_locatorcode}, such that $d^{\ast}$ of~\refeq{inequality_for_d} is maximized.

\section{Comparison to Known Bounds} \label{sec_LocatorCodes}
\subsection{The Hartmann--Tzeng Bound}
We restate the HT bound as given in Theorem~\ref{theo_HT} to draw a connection to the bound given in Theorem~\ref{theo_minimumdistance}. We multiply with the inverse of $\HTa$ or $\HTb$ modulo $n$, such that:
\begin{equation} \label{eq_specialHT}
 m>\nuHT+1 \quad \text{for} \quad  \{ \HTconstb + i_{1}m+i_{2} : 0 \leq i_1 \leq \muHT-2, 0 \leq i_2 \leq \nu \} \subseteq D_{\mathcal{C}}
\end{equation}
with $\gcd(n,m)=1$ for a given code \CYC{q}{n}{k}{d} holds.

Throughout this section, we refer to this representation of the HT bound.
In the following subsection, we consider a single parity check code as non-zero-locator code and draw the connection to a particular case of the HT bound.
The general case of~\refeq{eq_specialHT} is considered in Subsection~\ref{subsec_RS}, where we use RS codes as non-zero-locator codes.

Some families of cyclic codes are identified in Subsection~\ref{subsec_familiesOfCode}.

\subsection{Single Parity Check Code as Non-Zero-Locator Code} \label{subsec_ParityCheck}
Let \SPC{\LCn} denote a cyclic single parity check code of length $\LCn$, dimension $\LCn-1$ and minimum distance $2$ over an extension field $\F{\LCq}$ of \F{q}. Let $\beta$ be a primitive $\LCn$th root of unity in an extension field of $\F{\LCq}$. The generator polynomial $g(x)$ of \SPC{\LCn} is
\begin{equation*} \label{eq_genpoly_singleparity}
g(x) = x-1.
\end{equation*}
Furthermore, let a cyclic code $\CYCa$ with defining set $\defset{\CYCa}$
be given, such that for the parameters $\HTconstb=1$ and $m=\nuHT+2$ the normalized HT bound of~\refeq{eq_specialHT} holds.
We illustrate the defining set $\defset{\SPCa} = \{0\}$ of $\SPCa$ with length $\LCn=\nuHT+2$ and the defining set $\defset{\CYCa}$ in Table~\ref{tab_HT_ParityCheck}.
\begin{table}[htbp]
\centering
\caption{Defining sets $\defset{\CYCa}$ of a given cyclic code $\CYCa$ and $\defset{\SPCa}$ of its associated single parity check code \SPCa{} of length $\LCn$ in the interval $\left[0,m(\muHT-1)\right]$.}
\label{tab_HT_ParityCheck}
%\begin{tabularx}{\columnwidth}{l|c;{2pt/2pt}ZZcc;{2pt/2pt}ZZcc;{2pt/2pt}Zcc}
\begin{tabularx}{.95\columnwidth}{l|ZZZZ;{2pt/2pt}ZZZc;{2pt/2pt}ZZc;{2pt/2pt}Z}
\centering
$\defset{\CYCa}$ & $\square$ & 1 & .. & $m$-$1$ & $\square$ & $m$+$1$ & .. & 2$m$-$1$ & $\square$ &
.. & $m$($\muHT$-1)-1 & $\square$\\
$\defset{\SPCa}$  & 0 & $\square$ & .. & $\square$ & 0 & $\square$ & .. & $\square$ & 0 & .. & $\square$ & 0\\
\end{tabularx}
\end{table}
The sequence is illustrated in terms of parameters of the HT bound~\refeq{eq_specialHT}.
For this special case, the non-zero-locator code \LC{\LCq}{\LCn}{\LCk}{\LCd} is a \SPC{\LCn} code. We have:
\begin{equation*}
 \LCn = \nuHT+ 2, \quad \LCk  = \nuHT +1, \quad \LCd  = 2,
 \end{equation*}
and we obtain a zero-sequence of length $\mulo -1 = m(\muHT-1)+1$. From Theorem~\ref{theo_minimumdistance} we obtain:
\begin{align} \label{eq_boundsingeparitycheck}
 d^{\ast} & = \left \lceil \frac{m(\muHT-1)+2}{2} \right\rceil = \left \lceil \frac{(\nuHT+2)\muHT - \nuHT}{2} \right\rceil =  \left \lceil \muHT + \frac{\nuHT (\muHT -1)}{2} \right\rceil,
\end{align}
where we used $m = \nuHT+2$.
In Fig.~\ref{fig_HTLocator1} we illustrate $d^{\ast}$ of~\refeq{eq_boundsingeparitycheck} for different parameters $\nuHT$ and $\muHT$.
\begin{figure*}[htb]
\normalsize
 \centering
% \tikzsetnextfilename{paritycheck}
%\begin{tikzpicture} [scale=1]
%\input{../fig/paritycheck.tex}
\includegraphics[scale=1.4]{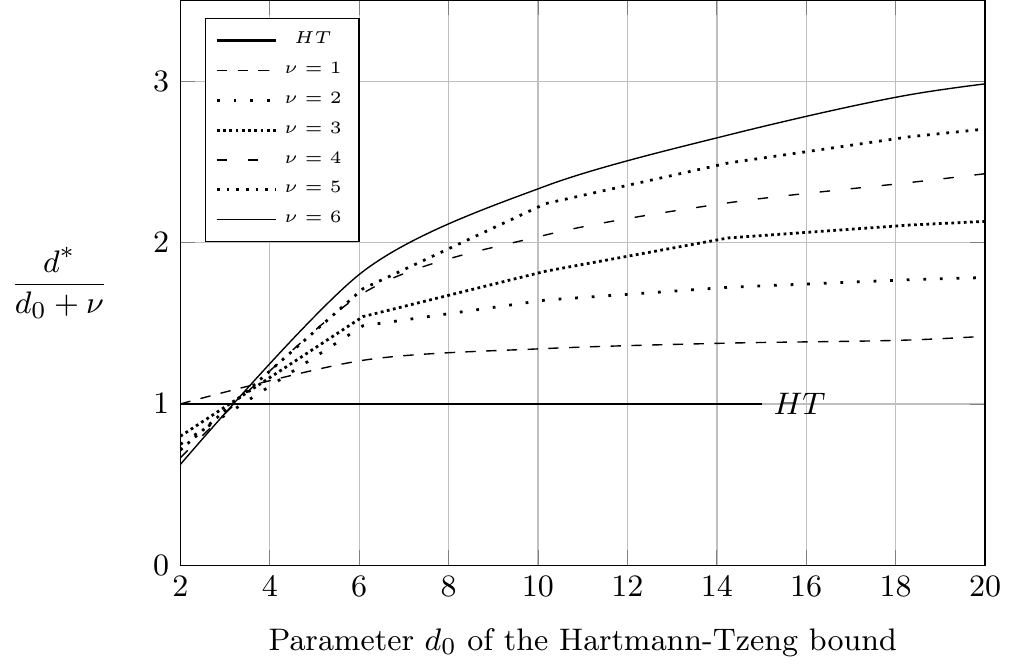}
%\end{tikzpicture}
\caption{Illustration of the fraction $d^{\ast}/(\muHT+\nuHT)$ of our bound $d^{\ast}$ of~\refeq{eq_dfracRS} to the Hartmann--Tzeng bound $\muHT+\nuHT$ for 
$\nuHT = 1,\dots,6$ and $\muHT=2,\dots,20$. The parameters of the HT bound are $m=\nuHT+2$ (see Table~\ref{tab_HT_ParityCheck}). We used a single parity check code as non-zero-locator code. Our bound $d^{\ast}$ is better than the HT bound for $\muHT>3$.}
\label{fig_HTLocator1}
\end{figure*}
For $\muHT \geq 4$ (independently from $\nuHT$) our bound improves the HT bound (see Proposition~\ref{prop_OurBoundBetterThanHT} in the next subsection). Note that for $\nuHT = 0$ the HT bound and our bound coincide with the BCH bound. Let us study the following example.
\begin{example}[Parity Check Code as Non-Zero-Locator Code]
Let us consider the binary reversible~\cite{Massey_ReversibleCodes_1964} cyclic code \CYC{2}{65}{41}{8} with the defining set $D_{\mathcal C} =  \COS{1}{65} \cup \COS{5}{65}$. We know that
\begin{align*}
\{\square,-5,-4,\square,-2,-1,\square,1,2,\square,4,5,\square\} \subseteq D_{\mathcal C}.
\end{align*}
The HT bound gives a lower bound of $d \geq 6$ on the minimum distance of $\CYC{2}{65}{41}{8}$ (for $\HTconstb=-5$, $m=3$, $\muHT=5$ and $\nuHT=1$).
We can associate the  single parity check code \SPCa(3,2,2) over $\F{2^2}$ with generator polynomial $g(x)=x-1$ as a non-zero-locator code for \CYC{2}{65}{41}{8}. The defining sets $\defset{\CYCa}$ and  $\defset{\SPCa}$ are shown in Table~\ref{tab_HT_ParityCheck_Ex1}.
\begin{table}[htbp]
\centering
\caption{Subset of the defining sets $\defset{\CYCa}$ of the \CYC{2}{65}{41}{8} code in the
interval $\left[-6,6\right]$. The set $\defset{\SPCa}$ is the defining set of a single parity check code $\SPCa$ of length $\LCn = 3$ that is the associated non-zero-locator code.}
\label{tab_HT_ParityCheck_Ex1}
\begin{tabularx}{.95\columnwidth}{l|cZZ;{2pt/2pt}ZZZ;{2pt/2pt}ZZZ;{2pt/2pt}ZZZ;{2pt/2pt}Z}
$\defset{\CYCa}$ & $\square$ & -5 & -4 & $\square$ & -2 & -1 & $\square$ & 1 & 2 & $\square$ & 4 & 5 & $\square$ \\ 
$\defset{\SPCa}$ & 0 & $\square$ & $\square$ & 0 & $\square$ & $\square$ & 0 & $\square$ & $\square$ &
0 & $\square$  & $\square$ & 0
\end{tabularx}
\end{table}
With~\refeq{eq_boundsingeparitycheck} we obtain for $d^{\ast}$:
\begin{equation*}
d^{\ast} =  \left \lceil \muHT + \frac{\nuHT (\muHT -1)}{2} \right\rceil = \left \lceil 5 + \frac{1(5-1)}{2} \right\rceil = 7.
\end{equation*}
Furthermore, we can decode up to $(d^{\ast}-1)/2 = 3$ errors for \CYC{2}{65}{41}{8} (see Section~\ref{sec_decoding}).
\end{example}

\subsection{Cyclic Reed--Solomon Codes as Non-Zero-Locator Codes} \label{subsec_RS} 
Let a $q$-ary cyclic code $\CYCa$ with defining set $\defset{\CYCa}$ be given such that for the parameters $\HTconstb=1$ and $m > \nuHT+2$, the normalized Hartmann--Tzeng bound of~\refeq{eq_specialHT} with $\muHT>2$ and $\nuHT>0$ holds.
Let a cyclic Reed--Solomon code \RS{\LCq}{\LCn}{\LCk}{\delta} over an extension field \F{\LCq} of $\F{q}$ with
\begin{equation*}
\LCn = m, \quad \LCk = \nuHT +1, \quad \LCd = m -\nuHT, \quad \delta=0
\end{equation*}
as in Definition~\ref{def_CyclicRS} be the associated non-zero-locator code.

Table~\ref{tab_HT_ReedSolomon} shows the defining set $\defset{\CYCa}$ and the defining set $\defset{\RSa}$ of \RS{\LCq}{m}{\nuHT+1}{0}.

\begin{table*}[htbp]
\centering
\caption{Defining sets $\defset{\CYCa}$ for $\HTconstb=1$ and $m$ of the HT bound~\refeq{eq_specialHT} and $\defset{\RSa}$ of the associated non-zero-locator code in the
interval $\left[-(m-\nuHT)-1,m(\muHT-1)\right]$.}
\label{tab_HT_ReedSolomon}
%\begin{center} %this centering has no effect, because of the tabx propertiers
\begin{tabularx}{.99\columnwidth}{p{0.04\columnwidth}|p{0.01\columnwidth}p{0.01\columnwidth}p{0.07\columnwidth}p{0.01\columnwidth}p{0.01\columnwidth}p{0.035\columnwidth};{2pt/2pt}p{0.01\columnwidth}p{0.01\columnwidth}p{0.07\columnwidth}p{0.03\columnwidth}p{0.01\columnwidth}p{0.08\columnwidth};{2pt/2pt}p{0.01\columnwidth}p{0.01\columnwidth}p{0.07\columnwidth}p{0.01\columnwidth}p{0.07\columnwidth}}
%\begin{tabularx}{.95\columnwidth}{l|ZZZ;{2pt/2pt}ZZZZZZZ;{2pt/2pt}cZcZZZ;{2pt/2pt}ZZ}
$\defset{\CYCa}$ & $\square$ & .. & $\square$ & 1 & .. & $\nuHT$+1 & $\square$ & .. & $\square$ & $m$+1 & .. & $m$+$\nuHT$+1 & $\square$ & .. & $\square$ & ..  & $\square$ \\ 
$\defset{\RSa}$ & 0 & .. & $m$-$\nuHT$-$2$ & $\square$ & .. & $\square$ & 0 & .. & $m$-$\nuHT$-$2$ & $\square$ & .. & $\square$ & 0 & .. &  $m$-$\nuHT$-$2$ & ..  &  $m$-$\nuHT$-$2$
\end{tabularx}
%\end{center}
\end{table*}
The $\LCn -(\nuHT+2) +1 = m-\nuHT-1$ consecutive zeros of the cyclic Reed--Solomon code \RS{\LCq}{m}{\nuHT+1}{0} fill the missing zeros of the given cyclic code $\CYC{q}{n}{k}{d}$.
The obtained ``zero''-sequence has length $\mulo-1  = m(\muHT-1) + m-\nuHT-1$.
Therefore, we obtain from~\refeq{inequality_for_d}:
\begin{align} \label{eq_dfracRS}
d^{\ast} & = \left \lceil \frac{m(\muHT-1)+m-\nuHT}{m-\nuHT}  \right\rceil = \left \lceil \frac{m\muHT-m+m-\nuHT}{m-\nuHT} \right\rceil = \left \lceil \frac{m\muHT-\nuHT}{m-\nuHT} \right\rceil.
\end{align}
Note that for $m=\nuHT+2$ the Reed--Solomon code is a single parity check code and we obtain the result from~\refeq{eq_boundsingeparitycheck}.
\begin{figure*}[htbp]
\normalsize
 \centering
%\begin{tikzpicture} [scale=1]
% \input{../fig/HTcaseone.tex}
%\end{tikzpicture}
\includegraphics[scale=1.4]{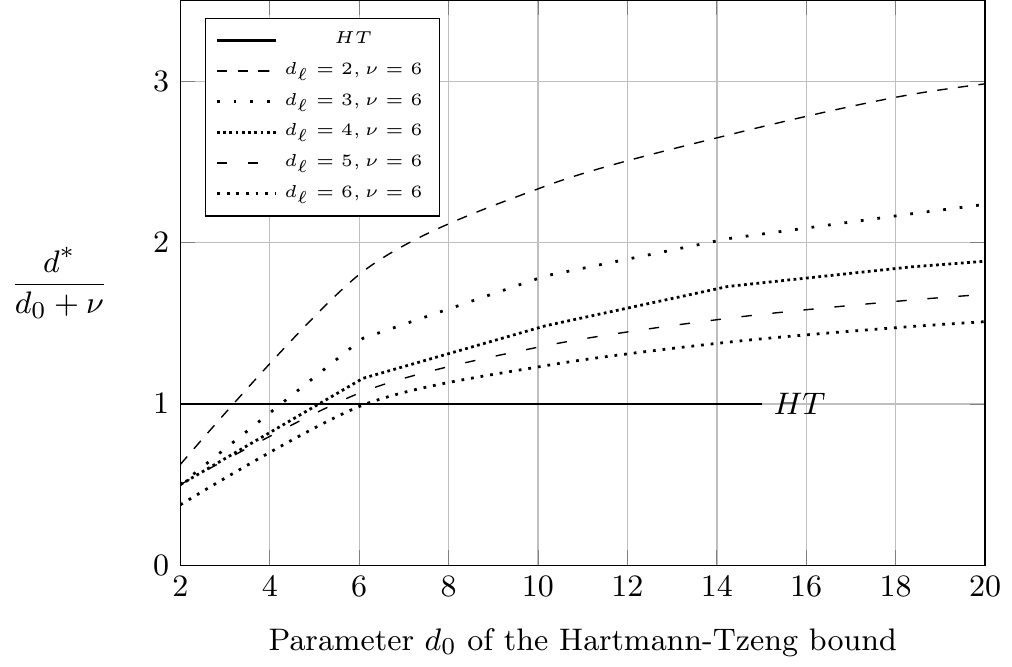}
\caption{Illustration of the fraction $d^{\ast}/(\muHT+\nuHT)$ of our bound $d^{\ast}$ of~\refeq{eq_dfracRS} to the Hartmann--Tzeng bound $\muHT+\nuHT$ for 
$\nuHT = 6$, $\muHT=2,\dots,20$ and $m$.
We used an RS code as non-zero-locator code with minimum distance $\LCd = m-\nuHT$ (see Table~\ref{tab_HT_ReedSolomon}).}
\label{fig_HTLocator2}
\end{figure*}
Let us precise the cases where our bound $d^{\ast}$ is better than the Hartmann--Tzeng bound $\muHT+\nuHT$.
\begin{proposition} \label{prop_OurBoundBetterThanHT}
Let a $q$-ary cyclic code \CYC{q}{n}{k}{d} with a subset of its defining set $\defset{\CYCa}$ with parameters $\HTconstb$, $m$, $\muHT$ and $\nuHT$ as stated in Theorem~\ref{theo_HT} be given. Let $\LC{\LCq}{m}{\nuHT+1}{m-\nuHT} = \RS{\LCq}{m}{\nuHT+1}{0}$ be the associated non-zero-locator code as in Definition~\ref{eq_locatorandcodeword} with $\mu = m(\muHT-1)+m-\nuHT$.
Then for
\begin{equation*} 
\muHT > m-\nuHT+1,
\end{equation*}
$d^{\ast} > \muHT + \nuHT$ holds.
\end{proposition}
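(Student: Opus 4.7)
The plan is to start from the explicit formula
\[
d^{\ast} = \left\lceil \frac{m\muHT - \nuHT}{m - \nuHT} \right\rceil
\]
derived in equation~\refeq{eq_dfracRS}, and reduce the claim $d^{\ast} > \muHT + \nuHT$ to a direct algebraic inequality equivalent to the hypothesis $\muHT > m - \nuHT + 1$. Since the RS code used as non-zero-locator has minimum distance $\LCd = m - \nuHT \geq 2$, the denominator $m - \nuHT$ is strictly positive, so I can clear it without reversing the inequality sign.

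First, I would observe that $\muHT + \nuHT$ is an integer, so it suffices to prove the strict inequality
\[
\frac{m\muHT - \nuHT}{m - \nuHT} > \muHT + \nuHT,
\]
because in that case the ceiling is at least $\muHT + \nuHT + 1$, which gives $d^{\ast} > \muHT + \nuHT$. Next I would multiply through by $m - \nuHT > 0$, expand the right-hand side, and simplify. The step-by-step reduction is
\[
m\muHT - \nuHT > (\muHT+\nuHT)(m-\nuHT) = m\muHT - \muHT\nuHT + m\nuHT - \nuHT^2,
\]
which, after cancelling $m\muHT$ on both sides, becomes
\[
-\nuHT > -\muHT\nuHT + m\nuHT - \nuHT^2.
\]
Dividing by $\nuHT > 0$ (the case $\nuHT = 0$ reduces to the BCH bound and is excluded from the setting of the proposition) gives $-1 > -\muHT + m - \nuHT$, i.e.\ $\muHT > m - \nuHT + 1$, which is exactly the hypothesis.

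Reading the chain of equivalences in reverse, the hypothesis $\muHT > m - \nuHT + 1$ yields $\frac{m\muHT - \nuHT}{m - \nuHT} > \muHT + \nuHT$, and hence $d^{\ast} \geq \muHT + \nuHT + 1 > \muHT + \nuHT$ as required. The argument is essentially a one-line computation once the ceiling is handled by the integrality of $\muHT + \nuHT$; I expect no real obstacle. The only subtlety worth spelling out is that strict inequality before taking the ceiling is what guarantees strict improvement over the HT bound after taking the ceiling, and that the degenerate case $\nuHT = 0$ (which corresponds to the BCH bound) is implicitly excluded by the RS-code setup of Subsection~\ref{subsec_RS}.
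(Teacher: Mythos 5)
Your proof is correct and follows essentially the same route as the paper's: both start from $d^{\ast} = \lceil (m\muHT-\nuHT)/(m-\nuHT) \rceil$, reduce the ceiling to the underlying strict inequality over the rationals (valid because $\muHT+\nuHT$ is an integer), and simplify using $m-\nuHT>0$ and $\nuHT>0$ to arrive at $\muHT > m-\nuHT+1$. The only cosmetic difference is that the paper first isolates the integer part by writing $d^{\ast} = \lceil \muHT + (\muHT-1)\nuHT/(m-\nuHT) \rceil$ and compares the remaining fraction to $\nuHT$, whereas you clear the denominator directly; the algebra is the same.
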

\begin{proof}
From \refeq{eq_dfracRS} we have
\begin{equation*}
d^{\ast}  = \left \lceil \frac{m\muHT-\nuHT}{m-\nuHT} \right\rceil 
 = \left \lceil \frac{m\muHT-\muHT\nuHT +\muHT\nuHT - \nuHT}{m-\nuHT} \right\rceil = \left \lceil \muHT + \frac{(\muHT-1)\nuHT}{m-\nuHT} \right\rceil.
\end{equation*}
Obviously, for $d^{\ast} > \muHT + \nuHT$, we require that
\begin{equation*} \label{eq_betterthanHT}
 \frac{(\muHT-1)\nuHT}{m-\nuHT}  > \nuHT \; \Longleftrightarrow \; \muHT  > m-\nuHT+1.
\end{equation*}
\end{proof}
For $m-\nuHT = \LCd = 2$, the associated RS code is a single parity check code and our bound is better than the HT bound for $\muHT > 3$ (see Fig.~\ref{fig_HTLocator1}). 
Some other cases, where the minimum distance of the associated RS code $\LCd=m-\nuHT$ varies between two and six, are illustrated in Fig.~\ref{fig_HTLocator2}.

\subsection{Some Families of Cyclic Codes and Their Connection to Other Bounds}
\label{subsec_familiesOfCode}
We identify some families of cyclic codes and refine our bound on the minimum distance of Theorem~\ref{theo_minimumdistance}. The classification is done by means of the associated non-zero-locator code. For all codes, we can decode up to $\lfloor(d^{\ast}-1)/2 \rfloor$ errors (see Section~\ref{sec_decoding}).

\subsubsection*{Single Parity Check Code as Non-Zero-Locator Code}
Let the defining set $\defset{\CYCa}$ of a given $q$-ary cyclic code \CYC{q}{n}{k}{d} contain the elements as shown in Table~\ref{tab_Class_ParityCheck}. Furthermore, let
$\gcd(n,3)=1$.
\begin{table}[htbp]
\setlength\tabcolsep{5pt}
\centering
\caption{Subset of the defining sets $\defset{\CYCa}$ of a given cyclic 
code $\CYCa$ in the
interval $\left[-10,10\right]$. The set $\defset{\LCa} = \{0\}$ is the defining set of the single parity check code $\LC{2}{3}{2}{2}$.}
\label{tab_Class_ParityCheck}
%\begin{tabularx}{\columnwidth}{p{0.025\columnwidth}|p{0.033\columnwidth}p{0.01\columnwidth};{2pt/2pt}p{0.025\columnwidth}p{0.025\columnwidth}p{0.01\columnwidth};{2pt/2pt}p{0.025\columnwidth}p{0.025\columnwidth}p{0.01\columnwidth};{2pt/2pt}p{0.025\columnwidth}p{0.025\columnwidth}p{0.01\columnwidth};{2pt/2pt}p{0.01\columnwidth}p{0.01\columnwidth}p{0.01\columnwidth};{2pt/2pt}p{0.01\columnwidth}p{0.01\columnwidth}p{0.01\columnwidth};{2pt/2pt}p{0.01\columnwidth}p{0.01\columnwidth}p{0.01\columnwidth};{2pt/2pt}p{0.01\columnwidth}}
\begin{tabularx}{.95\columnwidth}{l|c;{2pt/2pt}ZZZ;{2pt/2pt}ZZZ;{2pt/2pt}ZZZ;{2pt/2pt}ZZZ;{2pt/2pt}ZZZ;{2pt/2pt}ZZZ;{2pt/2pt}ZZ}
$\defset{\CYCa}$  & -10 & $\square$ & -8 & -7 & $\square$ & -5 & -4 & $\square$ & -2 & -1 & $\square$ & 1 & 2 & $\square$ & 4 & 5 & $\square$ & 7 & 8 & $\square$ & 10 \\
$\defset{\LCa}$ & $\square$ & 0 & $\square$ & $\square$ & 0 & $\square$ & $\square$ & 0 & $\square$ & $\square$ & 0 & $\square$ & $\square$ & 0 & $\square$  & $\square$ & 0 & $\square$ & $\square$ & 0 & $\square$
\end{tabularx}
\end{table}
We associate a single parity check code $\SPCa(3,2,2)$ and obtain $\mulo=22$ and therefore $d \geq d^{\ast} = 11$.

For binary reversible cyclic codes~\cite{Massey_ReversibleCodes_1964, Zetterberg_CyclicCodesFromIrreduciblePolynomialsForCorrectionOfMultipleErrors_1962} we require only $\{1,5,7\}$ to be a subset of the defining set since the other elements are then included automatically.

If the binary cyclic code is not reversible, the defining set has to contain $\{-7$, $-5$, $-1$,
$1$, $5$, $7\}$. This requirement coincides with the 5-error-correcting pair of~\cite[Proposition 8]{DuursmaKoetter_ErrorLocatingPairs_1994}.  The codes of~\cite[Proposition 7, Example 21 and 22]{DuursmaKoetter_ErrorLocatingPairs_1994} require a smaller subset of their defining set $\defset{\CYCa}$. For these codes, we obtain the same bound on the minimum distance of $\CYCa$.

\subsubsection*{Binary Hamming Code as Non-Zero-Locator Code}
Let the defining set $\defset{\CYCa}$ of a given binary cyclic code \CYC{2}{n}{k}{d} contain the elements as shown in Table~\ref{tab_Class_Hamming}. Furthermore, let
$\gcd(n,7)=1$.

\begin{table*}[htbp]
\centering
\caption{Subset of the defining sets $\defset{\CYCa}$ of a given cyclic 
code $\CYCa$ in the
interval $\left[1,20\right]$. The set $\defset{\LCa}=\{3,5,6\}$ is the defining set of the binary Hamming code $\LC{2}{7}{4}{3}$.}
\label{tab_Class_Hamming}
%\begin{tabularx}{\columnwidth}{p{0.03\columnwidth}|p{0.01\columnwidth}p{0.01\columnwidth}p{0.01\columnwidth}p{0.01\columnwidth}p{0.01\columnwidth}p{0.01\columnwidth}p{0.01\columnwidth};{2pt/2pt}p{0.01\columnwidth}p{0.01\columnwidth}p{0.01\columnwidth}p{0.01\columnwidth}p{0.01\columnwidth}p{0.01\columnwidth}p{0.02\columnwidth};{2pt/2pt}p{0.01\columnwidth}p{0.01\columnwidth}p{0.01\columnwidth}p{0.01\columnwidth}p{0.01\columnwidth}p{0.01\columnwidth}}
\begin{tabularx}{.95\columnwidth}{l|ZZZZZZ;{2pt/2pt}ZZZZZZZ;{2pt/2pt}ZZZZZZZ}
$\defset{\CYCa}$ & 1 & 2 & $\square$ & 4 & $\square$ & $\square$ & 7 & 8 & 9 & $\square$ & 11 & $\square$ & $\square$ & 14 & 15 & 16 & $\square$ & 18 & $\square$ & $\square$ \\
$\defset{\LCa}$ & $\square$ & $\square$ & 3 &  $\square$ & 5 & 6 & $\square$ & $\square$ &
$\square$ & 3 & $\square$ & 5 & 6 & $\square$ & $\square$ & $\square$ & 3 & $\square$ & 5 & 6 
\end{tabularx}
\end{table*}
We associate the binary Hamming code $\LC{2}{7}{4}{3}$ with defining set $\defset{\LCa}=\{3,5,6\}$. As shown in Table~\ref{tab_Class_Hamming}, we obtain $\mulo=21$ and therefore $d \geq d^{\ast} = 7$.

For binary cyclic codes we require $\{1,7,9,11,15\}$ to be a subset of the defining set $\defset{\CYCa}$.

\subsubsection*{Reed--Solomon Code as Non-Zero-Locator Code}
Let the defining set $\defset{\CYCa}$ of a given $q$-ary cyclic code \CYC{q}{n}{k}{d} contain the elements as shown in Table~\ref{tab_Class_RSCode}. Furthermore, let
$\gcd(n,4)=1$.
\begin{table*}[htbp]
\centering
\caption{Subset of the defining sets $\defset{\CYCa}$ of a given cyclic 
code $\CYCa$ in the interval $\left[-17,17\right]$ (only odd indexes are illustrated). The set $\defset{\LCa}= \{0,1\}$ is the defining set of a Reed--Solomon code $\RS{\LCq}{4}{2}{0}$. }
\label{tab_Class_RSCode}
%\begin{tabularx}{\columnwidth}{p{0.05\columnwidth}|p{0.01\columnwidth}p{0.01\columnwidth}p{0.035\columnwidth}p{0.035\columnwidth};{2pt/2pt}p{0.01\columnwidth}p{0.01\columnwidth}p{0.025\columnwidth}p{0.025\columnwidth};{2pt/2pt}p{0.01\columnwidth}p{0.01\columnwidth}p{0.01\columnwidth}p{0.01\columnwidth};{2pt/2pt}p{0.01\columnwidth}p{0.01\columnwidth}p{0.025\columnwidth}p{0.025\columnwidth};{2pt/2pt}p{0.01\columnwidth}p{0.01\columnwidth}}
\begin{tabularx}{.95\columnwidth}{l|ZZcc;{2pt/2pt}ZZZZ;{2pt/2pt}ZZZZ;{2pt/2pt}ZZZZ;{2pt/2pt}ZZ}
$\defset{\CYCa}$ & $\square$ & $\square$ & -13 & -11 & $\square$ & $\square$ & -5 & -3 & $\square$ & $\square$ & 3 & 5 & $\square$ & $\square$ & 11 & 13 & $\square$ &  $\square$ \\
$\defset{\RSa}$ & 0 & 1 & $\square$ & $\square$ & 0 & 1 & $\square$ & $\square$ & 0 & 1 & $\square$ & $\square$ & 0 & 1 & $\square$ & $\square$ & 0 & 1 
\end{tabularx}
\end{table*}
We associate an RS code $\RS{\LCq}{4}{2}{\delta = 0}$ over $\F{\LCq}$ which is an extension field of $\F{q}$ and consider the sequence
\begin{equation*}
c(\alpha^{-17}) a(\beta^0) , c(\alpha^{-17+2}) a(\beta^{1}) , c(\alpha^{-17+4}) a(\beta^{2}), \dots , c(\alpha^{-17+(\mulo-2)\cdot 2}) a(\beta^{\mulo-2}).
\end{equation*}
We have $\mulo=19$ and with $\LCd=3$, we obtain $d \geq d^{\ast} = 7$.

For binary reversible cyclic codes, we require $\{3,5,11,13\}$ to be a subset of the defining set $\defset{\CYCa}$.

Further families can be found in~\cite{ZehWachterBezza_EfficientDecodingOfSomeClassesArxiv_2011}
and can be seen as special case of this approach. 

\medskip

As previously seen, we identified cyclic codes by means of their potential non-zero-locator codes. To obtain a huge family of cyclic codes, the cardinality of the required subset of their defining set should be small. This implies a high cardinality of the defining set $|\defset{\LCa}|$ of the associated non-zero-locator code \LC{\LCq}{\LCn}{\LCk}{\LCd}. Both leads to a long zero-sequence
\begin{equation*}
c(\alpha^{\LCconst}) a(\beta^0) , c(\alpha^{\LCconst+1}) a(\beta^{1}) , \dots , c(\alpha^{\LCconst+\mulo-2}) a(\beta^{\mulo-2}).
\end{equation*}
On the one hand, we need a low code-rate $\LCk/\LCn$ which implies a high $|\defset{\LCa}|$. On the other hand, the minimum distance $\LCd$ of $\LCa$ should be small to obtain a good bound $d^{\ast}$ according to~\refeq{inequality_for_d}.

This motivates the investigation of small-minimum-distance cyclic codes with lowest code-rate. In a first step, we consider binary cyclic codes with minimum distance two and three.

\section{Binary Cyclic Codes with Minimum Distance Two and Three as Non-Zero-Locator Code} \label{sec_badcyclic}
\subsection{General Idea}
As mentioned in Section~\ref{sec_loccode}, good candidates for non-zero-locator codes are cyclic codes with small minimum distance and lowest code-rate $\LCk/ \LCn $.
We consider binary cyclic codes with minimum distance two and three and lowest code-rate and show their defining set.

Primitive binary cyclic codes with minimum distance three were investigated by Charpin, Tiet\"{a}v\"{a}inen and Zinoviev in~\cite{Charpin_OnBinaryCyclicCodes_1997,Charpin_MinimumDistanceOfNonBinary_1999}. We generalize the results of~\cite{Charpin_OnBinaryCyclicCodes_1997} to binary cyclic codes of arbitrary length and show afterwards the implications, when we want to use them as non-zero-locator codes.
\begin{lemma}\cite{Charpin_OnBinaryCyclicCodes_1997} \label{lem_CyclicDistTwo}
Let $i,j$ with $0 \leq i < j \leq n-1$ be two arbitrary integers that do not belong to the same cyclotomic coset modulo $n$. Then the binary cyclic code \CYC{2}{n}{k}{d} with generator polynomial $g(x) = \COS{i}{n}(x) \cdot \COS{j}{n}(x) $ has minimum distance
two if and only if  $\gcd(n,i,j) > 1$.
\end{lemma}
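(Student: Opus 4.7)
My plan is to translate the existence of a weight--two codeword into a purely arithmetic condition on $n$, $i$, $j$, and then identify that arithmetic condition with $\gcd(n,i,j)>1$.

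First I would reduce to the existence of a codeword of the special form $1+x^c$. A weight--two codeword, reduced modulo $x^n-1$, has the shape $x^a+x^b$ with $0\le a<b\le n-1$; since $x^a$ is a unit in $\mathbb F_2[x]/(x^n-1)$ (its inverse is $x^{n-a}$), this is a codeword if and only if $1+x^c$ is a codeword for $c=b-a\in\{1,\dots,n-1\}$. Thus $d=2$ is equivalent to the existence of some $c$ with $1\le c\le n-1$ such that $g(x)\mid (1+x^c)$ in $\mathbb F_2[x]$ (working modulo $x^n-1$ is harmless since both divisors of $x^n-1$).

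Second I would turn divisibility by $g(x)=M_i^{(n)}(x)M_j^{(n)}(x)$ into a congruence. Since $M_i^{(n)}$ and $M_j^{(n)}$ are coprime (the integers $i$ and $j$ lie in distinct cyclotomic cosets), $g(x)\mid (1+x^c)$ holds iff $\alpha^{ci}=1$ and $\alpha^{cj}=1$, where $\alpha$ is the chosen primitive $n$-th root of unity; here I use that in characteristic two $1+x^c=x^c-1$. Because $\alpha$ has order $n$, this is equivalent to the pair of divisibility conditions $n\mid ci$ and $n\mid cj$.

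Third I would compute the smallest positive $c$ satisfying both divisibilities and compare it to $n$. The condition $n\mid ci$ is equivalent to $c$ being a multiple of $n/\gcd(n,i)$, and similarly for $j$, so the smallest positive admissible $c$ is
\[
c_0=\operatorname{\mathsf{lcm}}\!\bigl(n/\gcd(n,i),\,n/\gcd(n,j)\bigr).
\]
Using the standard identity $\operatorname{\mathsf{lcm}}(n/a,n/b)=n/\gcd(a,b)$ valid for divisors $a,b\mid n$ (which I would verify in one line by computing prime exponents), this simplifies to
\[
c_0=\frac{n}{\gcd\!\bigl(\gcd(n,i),\gcd(n,j)\bigr)}=\frac{n}{\gcd(n,i,j)}.
\]
Since $c_0\mid n$, we have $c_0\le n-1$ if and only if $c_0<n$, i.e.\ if and only if $\gcd(n,i,j)>1$. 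Combining this with the reductions above yields both directions of the lemma.

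The only real obstacle is the lcm identity in the third step, and that is immediate from prime factorisations; the rest is bookkeeping. I would also note in passing that $c_0=n$ (the forbidden value) is the only alternative, corresponding to $x^n\equiv 1$, which gives the zero codeword and so no weight--two word exists in that case.
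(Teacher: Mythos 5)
Your proof is correct and follows essentially the same route as the paper: both reduce minimum distance two to the existence of a binomial $1+x^{c}$ (equivalently $x^{k}+x^{\ell}$) killed by $\alpha^{i}$ and $\alpha^{j}$, translate this into $n\mid ci$ and $n\mid cj$, and conclude via the identity $\mathrm{lcm}\bigl(n/\gcd(n,i),\,n/\gcd(n,j)\bigr)=n/\gcd(n,i,j)$. The only difference is cosmetic: the paper asserts that both congruences hold iff $n/\gcd(n,i,j)\mid(k-\ell)$ without justification, whereas you explicitly verify the underlying lcm/gcd identity.
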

\begin{proof}
Let $\alpha$ be an $n$th root of unity. A binary cyclic code $\CYCa$ with generator polynomial $g(x) = \COS{i}{n}(x) \cdot \COS{j}{n}(x)$ of length $n$ has minimum distance two if there exist a binomial $c(x) = x^{k} + x^{\ell}$ that fulfills 
\begin{equation*}
c(\alpha^i) = c(\alpha^j) = 0.
\end{equation*}
This holds, if and only if 
\begin{equation*}
 \alpha^{ki} = \alpha^{\ell i} \quad \text{and} \quad \alpha^{kj} = \alpha^{\ell j}
\end{equation*}
or, equivalently,
\begin{equation*}
 (k- \ell )i \equiv (k- \ell )j \equiv 0 \mod n.
\end{equation*}
Both congruences are valid if and only if $n/\gcd(n,i,j)$ divides $k-\ell$. Therefore,
such $k$ and $\ell$ exist if and only if $\gcd(n,i,j) > 1$.
%\qed
\end{proof}
\begin{theorem}[Binary Cyclic Codes with Minimum Distance Two~\cite{Charpin_OnBinaryCyclicCodes_1997}] \label{theo_CyclicDistTwo}
Let $i_1,i_2,\dots,i_s$ with $0 \leq i_1 < \dots < i_s \leq n-1$ be $s$ arbitrary integers that do not belong to the same cyclotomic coset modulo $n$. Then the binary cyclic code \CYC{2}{n}{k}{d} with generator polynomial 
\begin{equation*}
g(x) = \prod_{j=1}^{s} \COS{i_j}{n}(x)
\end{equation*}
 has minimum distance two if and only if  $\gcd(n,i_1,\dots,i_s) > 1$.
\end{theorem}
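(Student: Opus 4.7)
The plan is to mirror the proof of Lemma~\ref{lem_CyclicDistTwo} in the excerpt, extending the divisibility argument from two cyclotomic cosets to $s$. First I would reduce the existence of a weight-two codeword in $\CYCa$ to a purely arithmetic question: since $\CYCa$ is cyclic, any weight-two codeword may be cyclically shifted to the form $c(x) = 1 + x^m$ with $1 \le m \le n-1$, and such a $c(x)$ lies in $\CYCa$ if and only if $\alpha^{i_j m} = 1$ for every $j = 1,\dots,s$ (using characteristic two), i.e.\ if and only if $n \mid i_j m$ for every $j$.

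Next I would consolidate these $s$ divisibility conditions into a single one. Using the identity $\gcd(i_1 m,\dots,i_s m) = m \cdot \gcd(i_1,\dots,i_s)$, valid for $m \ge 1$ and following by induction from the distributivity $\gcd(ac,bc) = c\gcd(a,b)$, the condition ``$n \mid i_j m$ for every $j$'' is equivalent to ``$n \mid m d^{\star}$'', where $d^{\star} = \gcd(i_1,\dots,i_s)$. Setting $d = \gcd(n,d^{\star}) = \gcd(n,i_1,\dots,i_s)$ and invoking $\gcd(n/d,\,d^{\star}/d) = 1$, this in turn reduces to the cleaner condition $n/d \mid m$.

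Both directions of the equivalence then follow at once. If $d > 1$, the choice $m = n/d \in [1,n-1]$ delivers the weight-two codeword $1 + x^{n/d} \in \CYCa$. Conversely, if a weight-two codeword exists then $n/d$ divides some $m$ with $1 \le m \le n-1$, which forces $n/d \le n-1 < n$ and hence $d > 1$. The main obstacle I anticipate is the algebraic reduction of the simultaneous system of congruences to the single divisibility $n/d \mid m$; once that identity is in hand, the two implications are essentially the same modular computation used in the two-coset case, so no substantially new idea beyond Lemma~\ref{lem_CyclicDistTwo} is required.
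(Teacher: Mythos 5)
Your proposal is correct and follows the paper's route: the paper itself skips the proof as a straightforward extension of Lemma~\ref{lem_CyclicDistTwo}, and your argument is exactly that extension. Your normalisation to $c(x)=1+x^m$ (rather than the paper's $x^k+x^\ell$ with difference $k-\ell$) is cosmetically different but equivalent, and you additionally fill in the one step the paper leaves implicit, namely that the simultaneous congruences $n \mid i_j m$ for all $j$ collapse to $n/\gcd(n,i_1,\dots,i_s)\mid m$ via the identity $\gcd(i_1 m,\dots,i_s m)=m\gcd(i_1,\dots,i_s)$ and the coprimality $\gcd(n/d,\,d^{\star}/d)=1$. In short: same approach, slightly more detail than the source.
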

We skip the proof of Theorem~\ref{theo_CyclicDistTwo}, because it is straightforward to the proof of Lemma~\ref{lem_CyclicDistTwo}.

The following lemma is a generalization of~\cite[Theorem 1]{Charpin_OnBinaryCyclicCodes_1997} to binary cyclic codes of arbitrary length.
\begin{lemma}[Binary Cyclic Codes with Minimum Distance Three] \label{lem_CyclicDistThree}
Let $i,j$ with $0 \leq i < j \leq n-1$ be arbitrary integers that do not belong to the same cyclotomic coset modulo $n$. Let $g$ be such that $2^g-1$ divides $n$. 
If there exists an integer $r$ with $0 < r < 2^g-1$, where $\gcd(r,2^g-1)=1$, such that 
both $i$ and $j$ are in $\COS{r}{2^g-1}$, then the binary cyclic code \CYC{2}{n}{k}{d} with generator polynomial $g(x) =  \COS{i}{n}(x) \cdot \COS{j}{n}(x)$ has minimum distance $d\leq 3$. If, moreover, $\gcd(n,i,j) = 1$, then $d=3$.
\end{lemma}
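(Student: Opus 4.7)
The plan is to construct an explicit weight-three codeword, which shows $d\leq 3$, and then dispose of the possibility $d\leq 2$ using the previous lemma. The key structural fact that makes the construction possible is that $2^g-1\mid n$ lets us identify a primitive element of $\F{2^g}$ with a power of the chosen root of unity $\alpha$, so that the generator's roots $\alpha^i,\alpha^j$ can be analysed inside the small subfield $\F{2^g}$ via Frobenius.

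Concretely, first I would set $\beta=\alpha^{n/(2^g-1)}$, so that $\beta$ has order $2^g-1$ and $\beta^i,\beta^j\in\F{2^g}$. Writing $i\equiv r\cdot 2^a$ and $j\equiv r\cdot 2^b\pmod{2^g-1}$, both of these values equal iterates of Frobenius applied to $\xi\defeq\beta^r$; because $\gcd(r,2^g-1)=1$, the element $\xi$ is a primitive $(2^g-1)$-th root of unity, hence generates $\F{2^g}^{*}$. I will then look for a trinomial $c(x)=x^{k_1}+x^{k_2}+x^{k_3}$ whose exponents are multiples of $n/(2^g-1)$, say $k_\ell=\kappa_\ell\,n/(2^g-1)$ with $\kappa_\ell\in\{0,\dots,2^g-2\}$. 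A direct calculation gives
\begin{equation*}
c(\alpha^i)=\bigl(\xi^{\kappa_1}+\xi^{\kappa_2}+\xi^{\kappa_3}\bigr)^{2^a},\qquad c(\alpha^j)=\bigl(\xi^{\kappa_1}+\xi^{\kappa_2}+\xi^{\kappa_3}\bigr)^{2^b},
\end{equation*}
so the two vanishing conditions collapse to the single identity $\xi^{\kappa_1}+\xi^{\kappa_2}+\xi^{\kappa_3}=0$ in $\F{2^g}$. Since the hypothesis on $r$ forces $g\geq 2$, the field $\F{2^g}$ contains an element $y\notin\{0,1\}$, and $1+y+(1+y)=0$ yields three distinct elements of $\F{2^g}^{*}$ summing to zero; taking their discrete logarithms to base $\xi$ supplies the required distinct $\kappa_\ell$. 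The resulting $c(x)\in\F{2}[x]$ has weight three, lies in the code (its values at $\alpha^i,\alpha^j$ vanish and all other zeros are Frobenius-conjugates obtained automatically), and has degree less than $n$ since $\kappa_\ell\leq 2^g-2$.

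For the final assertion, I would note that any weight-one codeword $x^k$ would force $\alpha^{ki}=0$, which is impossible, so always $d\geq 2$. Invoking Lemma~\ref{lem_CyclicDistTwo} in contrapositive, the extra hypothesis $\gcd(n,i,j)=1$ excludes $d=2$, so $d\geq 3$; combined with $d\leq 3$ this gives $d=3$.

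The only delicate step will be pinning down the three exponents carefully: one must ensure that the chosen $\xi^{\kappa_\ell}$ are pairwise distinct in $\F{2^g}^{*}$ (hence $\kappa_\ell$ are distinct mod $2^g-1$, hence $k_\ell$ are distinct in $\{0,\dots,n-1\}$) and that the recipe really gives a binary trinomial. This is where the coprimality condition $\gcd(r,2^g-1)=1$ and the bound $g\geq 2$ are used essentially; everything else is a routine Frobenius/cyclotomic-coset bookkeeping exercise.
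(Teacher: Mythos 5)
Your proposal is correct and follows essentially the same strategy as the paper's proof: construct a binary trinomial whose exponents are multiples of $u=n/(2^g-1)$, use the Frobenius to reduce the vanishing at $\alpha^i$ and $\alpha^j$ to a single trinomial identity in the subfield $\F{2^g}$, and then invoke Lemma~\ref{lem_CyclicDistTwo} for the $d=3$ refinement. The paper's choice of exponents ($0$, $u\cdot r^{-1}$, $u\cdot b r^{-1}$ with $1+\beta+\beta^b=0$) is just the special case $y=\beta$ of your more flexible choice of $y\in\F{2^g}^*\setminus\{1\}$; in both cases the coprimality $\gcd(r,2^g-1)=1$ is what guarantees the discrete logarithms exist. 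One small remark: the paper states the lemma for $d\le 3$ and the $d=3$ case but only writes out the $d\le 3$ argument, so your explicit treatment of $d\ge 3$ via Lemma~\ref{lem_CyclicDistTwo} is a welcome completion rather than a deviation.
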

\begin{proof}
Let $\gamma$ be a primitive element of $\F{2^s}$, let $z=(2^s-1)/n$ and let $\alpha = \gamma^z$. Let $u=n/(2^g-1)$, then $\beta = \alpha^u = \gamma^{(2^s-1)/(2^g-1)}$, is a primitive element of $\F{2^g}$. Let $b$ be an integer in the interval $[1,2^g-2]$ such that:
\begin{equation*}
1+\beta + \beta^b = 0.
\end{equation*}
Define
\begin{equation*} 
c(x) = 1+ x^{u(1/r)} + x^{u(b/r)},
\end{equation*}
where the quotients $1/r$ and $b/r$ are calculated in the ring $\Z{2^g-1}$ of integers
modulo $2^g-1$.
For $i \in \COS{r}{2^g-1}$, two non-negative integers $k$ and $\ell$ exist such that
\begin{equation*}
i = \ell(2^g-1) + 2^kr.
\end{equation*}
Thus,
\begin{align*}
c(\alpha^i) & = 1+\alpha^{ui(1/r)} + \alpha^{ui(b/r)} \\
 & = 1+ \beta^{i(1/r)} + \beta^{i(b/r)} \\
 & = 1+ \beta^{2^kr(1/r)} + \beta^{2^kr(b/r)} \\
 & = 1+ \beta^{2^k} + \beta^{b 2^k } \\
 & = (1+ \beta + \beta^b)^{2^k} = 0.
\end{align*} %\qed
\end{proof}
Note that in \cite{Charpin_OnBinaryCyclicCodes_1997} the length of the cyclic code was $n=2^s-1$ and $u = (2^s-1)/(2^g-1)$.
%Lemma~\ref{lem_CyclicDistTwo} and Theorem~\ref{theo_CyclicDistThree}
%can be generalized to cyclic codes, where the generator polynomial $g(x)$ is
%a product of several minimal polynomials.
\begin{corollary}
Let $\CYCa$ be a binary cyclic code of length $n$. If there exist no $g$, s.t. $(2^g-1) \mid n$, then
$\CYCa$ cannot have minimum distance three.
\end{corollary}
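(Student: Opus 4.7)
The plan is to argue by contrapositive: I assume $\CYCa$ has a codeword of weight three and show that $n$ must be divisible by some $2^{g'}-1$ with $g'\ge 2$. After a cyclic shift the codeword can be written as $c(x)=1+x^a+x^b$ with $0<a<b<n$. Since $\CYCa$ is a proper ideal of $\F{2}[x]/(x^n-1)$, the polynomials $c(x)$ and $x^n-1$ share a non-trivial $\gcd$ in $\F{2}[x]$, so the generator $g_{\CYCa}(x)$ and $c(x)$ have a common root $\beta\in\overline{\F{2}}$. This $\beta$ is an $n$-th root of unity satisfying $1+\beta^a+\beta^b=0$; since $c(1)=1\ne 0$ in $\F{2}$ we have $\beta\ne 1$, so the order $m$ of $\beta$ is odd and at least $3$, $m\mid n$, and $\beta$ lies in $\F{2^g}$ with $g=\operatorname{ord}_m 2\ge 2$.

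It therefore suffices to show that $m$ is divisible by some $2^{g'}-1$ with $g'\ge 2$, since this combined with $m\mid n$ would contradict the hypothesis. As a first step I would reduce to a coprime-exponent case: letting $d=\gcd(a,b,m)$ and setting $\beta'=\beta^d$, $a'=a/d$, $b'=b/d$, $m'=m/d$, the element $\beta'$ has order $m'$, still satisfies $1+(\beta')^{a'}+(\beta')^{b'}=0$, and now $\gcd(a',b',m')=1$. Since $m'\mid m$, it is enough to prove $m'=2^{g'}-1$ in this coprime setting, where $g'=\operatorname{ord}_{m'}2$.

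In the coprime case the group-theoretic content comes through cleanly: $\gcd(a',b',m')=1$ forces $\langle(\beta')^{a'},(\beta')^{b'}\rangle=\langle\beta'\rangle$ multiplicatively, while the relation $(\beta')^{b'}=1+(\beta')^{a'}$ places both generators, and hence $\beta'$ itself, inside $\F{2}((\beta')^{a'})$. Combining these two observations yields $\F{2}(\beta')=\F{2}((\beta')^{a'})=\F{2^{g'}}$, so $\beta'$ is primitive as a field extension of degree $g'$. The genuinely hard step, which I expect to be the main obstacle, is upgrading this to the multiplicative statement $m'=2^{g'}-1$, i.e.\ that $\beta'$ generates $\F{2^{g'}}^\ast$ as a group. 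Proper cyclic subgroups of $\F{2^{g'}}^\ast$ can a priori contain an involution-pair $\{y,y+1\}$ (for instance the order-$9$ subgroup of $\F{2^6}^\ast$ does, through $(\beta^3,\beta^6)$ for $\beta$ of order $9$), so the coprimality of $(a',b',m')$ must be used essentially; I would attempt a descent argument using that the minimal polynomial of $\beta'$ (irreducible of degree $g'$) divides $1+x^{a'}+x^{b'}$, together with the Frobenius orbit structure of $\F{2^{g'}}^\ast$. Once $m'=2^{g'}-1$ is established, the chain $2^{g'}-1=m'\mid m\mid n$ with $g'\ge 2$ yields the desired contradiction.
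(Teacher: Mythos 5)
The paper states this corollary without proof; it is not a formal consequence of Lemma~\ref{lem_CyclicDistThree}, which gives only a \emph{sufficient} condition for a weight-three codeword, so a converse argument of the kind you attempt is genuinely required. Your preliminary steps are sound: the passage from a weight-three codeword $1+x^a+x^b$ to a nontrivial $n$-th root of unity $\beta$ with $1+\beta^a+\beta^b=0$, the coprime reduction $\beta'=\beta^d$ with $d=\gcd(a,b,m)$, and the field equality $\F{2}(\beta')=\F{2}\big((\beta')^{a'}\big)$ (using $\beta'\in\langle(\beta')^{a'},(\beta')^{b'}\rangle$ together with $(\beta')^{b'}=1+(\beta')^{a'}$) all check out.

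The gap is exactly the step you flag as ``the genuinely hard step'' and then do not carry out: showing that the group $\langle(\beta')^{a'},(\beta')^{b'}\rangle=\langle\beta'\rangle$, of order $m'$, is all of $\F{2^{g'}}^{*}$, i.e.\ $m'=2^{g'}-1$. Without that, you only have $m'\mid\gcd(n,2^{g'}-1)$, which does not produce a Mersenne divisor of $n$, and the contradiction never materializes. A gesture toward ``a descent argument using the minimal polynomial and Frobenius orbit structure'' is not a proof, and this is precisely where the content of the corollary lies; the fact that you can exhibit proper subgroups of $\F{2^{g}}^{*}$ containing a pair $\{y,y+1\}$ (your order-$9$ example in $\F{2^6}^{*}$) shows the claim really does hinge delicately on the coprimality you arranged. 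One remark that may help you close the gap: $m'=2^{g'}-1$ is stronger than what the corollary requires --- it would suffice to exhibit \emph{some} $g''\geq 2$ with $2^{g''}-1\mid n$. This invites a minimality argument over the degree $[\F{2}(\gamma):\F{2}]$ ranging over \emph{all} pairs $\{\gamma,1+\gamma\}$ of nontrivial $n$-th roots of unity, rather than insisting on the particular $\beta'$ produced by the gcd reduction.
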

\begin{theorem}[Binary Cyclic Codes with Minimum Distance Three] \label{theo_CyclicDistThree}
Let $i_1,i_2,\dots,i_s$ with $0 \leq i_1 < \dots < i_s \leq n-1$ be $s$ arbitrary integers that do not belong to the same cyclotomic coset modulo $n$. Let $g$ be such that $2^g-1$ divides $n$. 
If there exists an integer $r$ with $0 < r < 2^g-1$, where $\gcd(r,2^g-1)=1$, such that 
all $s$ integers $i_1,i_2,\dots,i_s$ are in $\COS{r}{2^g-1}$, then the binary cyclic code \CYC{2}{n}{k}{d} with generator polynomial 
\begin{equation*}
g(x) =  \prod_{j=1}^s \COS{i_j}{n}(x) 
\end{equation*}
has minimum distance $d\leq 3$. If, moreover, $\gcd(n,i_1,\dots,i_s) = 1$, then $d=3$.
\end{theorem}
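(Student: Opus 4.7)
The plan is to mimic the proof of Lemma~\ref{lem_CyclicDistThree} essentially verbatim, since the codeword constructed there depends only on the coset $\COS{r}{2^g-1}$ and not on the particular element of it. Combined with Theorem~\ref{theo_CyclicDistTwo}, this yields the refined statement on equality.

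First I would reuse the setup of Lemma~\ref{lem_CyclicDistThree}: let $\gamma$ be a primitive element of $\F{2^s}$, set $z = (2^s-1)/n$ and $\alpha = \gamma^z$; with $u = n/(2^g-1)$ the element $\beta = \alpha^u = \gamma^{(2^s-1)/(2^g-1)}$ is a primitive element of $\F{2^g}$, and one picks $b \in [1,2^g-2]$ with $1+\beta+\beta^b=0$. Then I would take as candidate codeword
\begin{equation*}
 c(x) = 1 + x^{u(1/r)} + x^{u(b/r)},
\end{equation*}
where the inverses are taken in $\Z{2^g-1}$ (well-defined because $\gcd(r,2^g-1)=1$).

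Next I would observe that for every $i_j \in \COS{r}{2^g-1}$ the identical computation as in Lemma~\ref{lem_CyclicDistThree} applies: writing $i_j = \ell(2^g-1) + 2^k r$, one computes
\begin{equation*}
 c(\alpha^{i_j}) = 1 + \beta^{2^k} + \beta^{b\,2^k} = (1+\beta+\beta^b)^{2^k} = 0.
\end{equation*}
Since this holds for each of the $s$ exponents $i_1,\dots,i_s$ simultaneously, every minimal polynomial $\COS{i_j}{n}(x)$ divides $c(x)$, so $g(x) = \prod_{j=1}^{s} \COS{i_j}{n}(x)$ divides $c(x)$ (the factors are pairwise coprime because the $i_j$ lie in distinct cyclotomic cosets modulo $n$). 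Hence $c(x)$ is a codeword of weight at most three, proving $d \leq 3$.

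For the equality statement, I would simply invoke Theorem~\ref{theo_CyclicDistTwo}: the assumption $\gcd(n,i_1,\dots,i_s) = 1$ rules out minimum distance two, so combined with $d \leq 3$ and $d\geq 2$ (since $c(x)$ has weight exactly three by construction, unless it degenerates), we conclude $d=3$. The only minor point to check is that $c(x)$ actually has weight three, i.e.\ that the three exponents $0$, $u(1/r)$ and $u(b/r)$ are distinct modulo $n$; this follows because $\beta, \beta^b$ and $1$ are pairwise distinct in $\F{2^g}$ (none of the three summands in $1+\beta+\beta^b=0$ is zero, since $b\in[1,2^g-2]$). There is no real obstacle here — the argument is essentially a bookkeeping extension of Lemma~\ref{lem_CyclicDistThree}, and the only thing to keep in mind is that the single codeword $c(x)$ simultaneously annihilates all $s$ roots because its construction depends on the common coset representative $r$ rather than on the individual $i_j$.
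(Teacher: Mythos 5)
Your proof is correct and takes exactly the approach the paper intends: the paper explicitly skips the proof as ``straightforward to the proof of Lemma~\ref{lem_CyclicDistThree},'' and you have supplied precisely the intended argument — the codeword $c(x)=1+x^{u(1/r)}+x^{u(b/r)}$ depends only on the common coset representative $r$, so the same computation kills $\alpha^{i_j}$ for every $j$, and Theorem~\ref{theo_CyclicDistTwo} upgrades $d\leq 3$ to $d=3$ under $\gcd(n,i_1,\dots,i_s)=1$. No gaps.
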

We skip the proof of Theorem~\ref{theo_CyclicDistThree}, because it is straightforward to the proof of Lemma~\ref{lem_CyclicDistThree}.

Let us consider a non-primitive binary cyclic code with minimum distance three.
\begin{example}[Non-primitive Binary Cyclic Code with Minimum Distance Three] \label{ex_BinaryCodeDistThree}
Let $n=119 = (2^3-1) \cdot 17$. In this case $g=3$ (see Theorem~\ref{theo_CyclicDistThree}). Then $\{1,11,51\}$ belong to $\COS{1}{7}$ and we have $\gcd(1,11,51)=1$. Therefore the binary cyclic code of length $n=119$ with generator polynomial
\begin{equation*}
 g(x) = \COS{1}{119}(x) \cdot \COS{11}{119}(x) \cdot \COS{51}{119}(x),
\end{equation*}
has dimension $k=68$ and minimum distance $d=3$.
\end{example}

\subsection{Implications for the Non-Zero-Locator Code}
We consider lowest-code-rate binary cyclic codes of minimum distance two and three. They are good candidates for non-zero-locator codes.

We first consider lowest-code-rate binary cyclic codes of minimum distance two.
\begin{proposition}[Lowest-Code-Rate Binary Cyclic Codes With Minimum Distance Two]  \label{theo_badbinaryTwo}
Let $a>1$, $g>1$ and $n$ be three integers, such that $n=ag$. Let $g$ be in the defining set $\defset{\CYCa}$.
Then the binary cyclic code \CYC{2}{n}{k}{2} of length $n$ with defining set:
\begin{equation*}
\defset{\CYCa} = \{0,\square,\dots,\square,g,\square, \dots, \square,2g,\square, \dots,\square, (a-1)g, \square, \dots,\square \} 
\end{equation*}
is the binary cyclic code of smallest dimension $k=a(g-1)$, lowest code-rate $R=(g-1)/g$ and minimum distance two.
\end{proposition}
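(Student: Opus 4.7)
The plan is to verify the three constituent claims: that the displayed defining set genuinely specifies a binary cyclic code of length $n$, that its parameters are $(n, a(g-1), 2)$, and that among binary cyclic codes of length $n$ with $d = 2$ whose defining set is confined to the multiples of $g$ modulo $n$, this one is minimum-dimensional (equivalently, lowest-code-rate).

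First I would check that $S = \{0, g, 2g, \dots, (a-1)g\}$ is closed under the binary Frobenius action $x \mapsto 2x \bmod n$. Writing $n = ag$, doubling $ig$ modulo $ag$ yields $g\cdot(2i \bmod a)$, which is again a multiple of $g$ in the range $[0, n-1]$. Hence $S$ is a union of full binary cyclotomic cosets modulo $n$ and is therefore a valid defining set, determining a genuine binary cyclic code $\CYCa \subseteq \F{2}^{n}$ whose generator polynomial is $\prod_{s \in S}(x-\alpha^{s})$ of degree $|S| = a$. This immediately gives $k = n - a = ag - a = a(g-1)$ and $R = k/n = (g-1)/g$, as claimed.

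Second, I would pin down the minimum distance. Applying Theorem~\ref{theo_CyclicDistTwo} to the defining set $S$, we have $\gcd(n, 0, g, 2g, \dots, (a-1)g) = g > 1$, so a weight-two codeword exists and $d \leq 2$. The other inequality is immediate: since $0 \in S$, every codeword $c(x) \in \CYCa$ satisfies $c(1) = c_{0} + \dots + c_{n-1} = 0$ over $\F{2}$, forcing even Hamming weight; in particular weight one is impossible, so $d \geq 2$. Thus $d = 2$ exactly.

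For the minimality, any binary cyclic code of length $n$ whose defining set is contained in the $a$ multiples of $g$ modulo $n$ has a defining set of cardinality at most $a$, hence dimension at least $a(g-1)$; the displayed code saturates this bound. The hypothesis $g \in \defset{\CYCa}$ combined with Theorem~\ref{theo_CyclicDistTwo} forces $\gcd(n, \defset{\CYCa})$ to be a divisor of $g$ that exceeds $1$, and taking this gcd equal to $g$ itself realizes the largest admissible defining set within that family, i.e.\ yields the smallest $k$ and lowest rate. The only substantive step is really the closure argument of paragraph two; the distance and the counting-based minimality follow directly once the cyclotomic structure of $S$ is in hand.
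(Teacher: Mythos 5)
Your verification that $S$ is closed under doubling modulo $n$ (hence a union of cyclotomic cosets and a genuine defining set), your dimension and rate computations, and your two-sided argument for $d=2$ are all correct, and in fact more explicit than the paper's own one-line proof, which only invokes Theorem~\ref{theo_CyclicDistTwo} for $d \le 2$ and writes the selection condition as ``$\gcd(i,g)>1$'' where $g \mid i$ is what the displayed defining set actually requires.

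The flaw is the last sentence of your minimality paragraph, which is backwards. Writing $g' := \gcd(n, D_{\mathcal{C}})$, you correctly observe $g' \mid g$ and $g' > 1$, so that $D_{\mathcal{C}}$ is contained in the set of multiples of $g'$, of which there are $n/g'$ in $\{0,\dots,n-1\}$. This count \emph{increases} as $g'$ decreases, so taking $g'=g$ realizes the smallest, not the largest, admissible defining set over the divisors $g'$ of $g$. Concretely, if $g$ is composite with smallest prime factor $p<g$, the set of all multiples of $p$ still contains $g$, still gives $\gcd(n,D_{\mathcal{C}})=p>1$ and hence $d=2$ by Theorem~\ref{theo_CyclicDistTwo}, and has $n/p>n/g=a$ elements, i.e.\ strictly lower rate. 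So the proposition's ``lowest code-rate'' claim must be read as ``lowest among codes with $\gcd(n, D_{\mathcal{C}})=g$'', i.e.\ relative to the fixed $g$; that is what the paper's argument (which maximizes $|D_{\mathcal{C}}|$ ``for a given $g$'') actually proves, and it is exactly what your first minimality sentence already establishes. The closing sentence should simply be dropped: it adds nothing and is false as stated.
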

\begin{proof}
We want to maximize $|\defset{\CYCa}|$ while keeping $d$ of $\CYCa$ at two. Therefore, we select for a given $g$ every cyclotomic coset \COS{i}{n} with $\gcd(i,g)>1$ for all $i=0,\dots,n-1$ to be in $\defset{\CYCa}$ with aimed minimum distance two. One the one hand, this guarantees the maximization of $|\defset{\CYCa}|$ and therefore the minimization of the code-rate. On the other hand, due to the condition $\gcd(i,g)>1$ (Theorem~\ref{theo_CyclicDistTwo}) the minimum distance of $\CYCa$ remains two.
%\qed
\end{proof}
A direct consequence of Proposition~\ref{theo_badbinaryTwo} is that we do not need to investigate these binary cyclic codes of minimum distance two any more. We obtain the same result when we select a parity check code \SPC{g} as 
non-zero-locator code.

\begin{proposition}[Lowest-Code-Rate Binary Cyclic Codes With Minimum Distance Three] \label{theo_BadCyclicThree}
Let $a>1$, $g>1$ and $n$ be three integers, such that $n=a(2^g-1)$. 
Let $r$ be an integer with $0 < r < 2^g-1$, where $\gcd(r,2^g-1)=1$.
Let $r$ be in the defining set $\defset{\CYCa}$.
Then the binary cyclic code \CYC{2}{n}{k}{3} of length $n$ with defining set:
\begin{equation}
\begin{split}
\defset{\CYCa} = \{ r \cdot i \mod n \mid \, i =  & j(2^{g}-1)+1, j(2^{g}-1)+2, j(2^{g}-1)+4, \dots,\\
& j(2^{g}-1)+2^{g-1} \quad \forall j = 0,\dots,a-1 \}
\end{split}
\end{equation}
is the binary cyclic code with the smallest dimension $k=a(2^g-1-g)$, lowest code-rate $R=(2^g-1-g)/(2^g-1)$ and minimum distance three.
\end{proposition}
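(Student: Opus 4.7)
The plan is to mirror the proof of Proposition~\ref{theo_badbinaryTwo}, substituting the minimum-distance-three characterisation (Theorem~\ref{theo_CyclicDistThree}) in place of Theorem~\ref{theo_CyclicDistTwo}. First I would verify that the prescribed set $\defset{\CYCa}$ is indeed the defining set of a genuine binary cyclic code of length $n$, i.e.\ closed under the Frobenius map $i \mapsto 2i \bmod n$. Doubling a typical element $r(j(2^g-1) + 2^\ell) \bmod n$ yields $r(2j(2^g-1) + 2^{\ell+1}) \bmod n$: when $\ell \leq g-2$ this stays in the prescribed form with $\ell' = \ell + 1$ and $j' = 2j \bmod a$; when $\ell = g-1$ the identity $2 \cdot 2^{g-1} = (2^g - 1) + 1$ forces a carry, producing $r((2j+1)(2^g-1) + 2^0) \bmod n$, which is again in the prescribed form with $\ell' = 0$ and $j' = (2j+1) \bmod a$. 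Hence $\defset{\CYCa}$ is a union of $2$-cyclotomic cosets modulo $n$.

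Second, I would pin down the exact minimum distance. Each element of $\defset{\CYCa}$ reduces modulo $2^g-1$ to $r \cdot 2^\ell \in \COS{r}{2^g-1}$, so with $\gcd(r, 2^g-1) = 1$ Theorem~\ref{theo_CyclicDistThree} immediately yields $d \leq 3$. To rule out $d = 2$, Theorem~\ref{theo_CyclicDistTwo} requires $\gcd(n, \defset{\CYCa}) = 1$; the pair $(j, \ell) = (0, 0)$ places $r$ into the defining set, and under the natural coprimality $\gcd(r, a) = 1$ --- which is also what makes $i \mapsto ri \bmod n$ a bijection of $\Z{n}$ and hence preserves the count of $ag$ prescribed indices --- we obtain $\gcd(n, \defset{\CYCa}) \le \gcd(n, r) = \gcd(a, r) = 1$, so $d = 3$. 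Counting is then immediate: the $ag$ indices $j(2^g-1) + 2^\ell$ with $0 \leq j \leq a-1$ and $0 \leq \ell \leq g-1$ are distinct and all lie in $[0, n-1]$, and the bijection $i \mapsto ri \bmod n$ translates them to $ag$ distinct elements of $\defset{\CYCa}$, giving $k = n - ag = a(2^g - 1 - g)$ and thus the claimed rate.

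The heart of the proposition is the minimality assertion: no binary cyclic code of length $n$ with minimum distance three admits a larger defining set. I would argue this as an extremal count: Theorem~\ref{theo_CyclicDistThree} confines any admissible defining set to indices whose residues modulo $2^g-1$ lie in a single cyclotomic coset $\COS{r}{2^g-1}$ with $\gcd(r, 2^g-1) = 1$, and since $|\COS{r}{2^g-1}| = g$ and each residue class modulo $2^g-1$ contains exactly $a$ elements of $[0, n-1]$, the upper bound is $ag$, attained by our construction.

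The main obstacle I anticipate is the converse direction implicit in this maximality argument: Theorem~\ref{theo_CyclicDistThree} is phrased only as a sufficient condition for $d \leq 3$, so a reciprocal statement is needed to conclude that every length-$n$ binary cyclic code with $d \leq 3$ has its defining set confined to a single residue coset modulo $2^g-1$. The natural attack is to start from a weight-$\leq 3$ codeword $c(x) = x^{a_0} + x^{a_1} + x^{a_2}$; the relations $c(\alpha^i) = 0$ for $i \in \defset{\CYCa}$ become, after normalisation, $1 + \beta_1^i + \beta_2^i = 0$ for fixed $n$th roots of unity $\beta_1, \beta_2$, and analysis of this identity in the multiplicative structure of the smallest field $\F{2^g}$ containing $\beta_1$ and $\beta_2$ should pin $\defset{\CYCa}$ down to the one-coset structure of Lemma~\ref{lem_CyclicDistThree}.
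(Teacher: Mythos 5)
Your proposal follows the same route as the paper's (very terse) proof: construct $D_{\mathcal C}$ from all cyclotomic cosets $M_i^{(n)}$ whose members reduce modulo $2^g-1$ into $M_r^{(2^g-1)}$, invoke Theorem~\ref{theo_CyclicDistThree} to bound $d\leq 3$, and argue the resulting defining set is as large as possible. Your write-up is considerably more careful, and two of its observations pinpoint real issues that the paper's own proof glosses over.

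First, you correctly flag that the minimality claim needs a \emph{converse} to Theorem~\ref{theo_CyclicDistThree}: that every binary cyclic code of length $n$ with $d\leq 3$ must have its defining set confined, modulo some $2^g-1\mid n$, to a single coset $M_r^{(2^g-1)}$ with $\gcd(r,2^g-1)=1$. The paper's proof silently assumes this direction --- it selects cosets ``with aimed minimum distance three'' and asserts maximality without excluding a larger defining set obtained some other way. Your proposed attack (take a weight-$\leq 3$ codeword, normalise to $1+\beta_1^i+\beta_2^i=0$ for $i\in D_{\mathcal C}$, and work inside the subfield generated by $\beta_1,\beta_2$) is the right way to close this; it is essentially the existence argument of Lemma~\ref{lem_CyclicDistThree} run backwards, and is worth carrying out in full rather than leaving as a sketch.

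Second, $\gcd(r,a)=1$ (equivalently $\gcd(r,n)=1$) is a genuinely missing hypothesis. Without it the map $i\mapsto ri\bmod n$ is not injective on the $ag$ prescribed indices, so the dimension count fails, and moreover $\gcd(n,D_{\mathcal C})>1$ can force $d=2$. Concretely, take $n=21$, $g=3$, $a=3$, $r=3$: then $\gcd(r,2^g-1)=\gcd(3,7)=1$ as the proposition requires, but the construction collapses to $D_{\mathcal C}=M_3^{(21)}=\{3,6,12\}$, three elements rather than $ag=9$, and since $\gcd(21,3,6,12)=3>1$ Theorem~\ref{theo_CyclicDistTwo} gives $d=2$, not $3$. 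Adding $\gcd(r,a)=1$ to the hypotheses repairs both the count $|D_{\mathcal C}|=ag$ and the distance claim, exactly as you observed. With that hypothesis made explicit, your proof is sound and is a strictly more rigorous rendering of the paper's argument --- up to the converse gap, which both you and the paper leave open.
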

\begin{proof}
We want to maximize $|\defset{\CYCa}|$ while keeping $d$ of $\CYCa$ at three.
For a given $r$ and for $(2^g-1) | n$, we select every cyclotomic coset \COS{i}{n} for all $i=0,\dots,n-1$ to be in the $\defset{\CYCa}$ of $\CYCa$ with aimed minimum distance three, such that $i \in \COS{r}{2^g-1}$.
One the one hand, this guarantees the maximization of $|\defset{\CYCa}|$ and therefore the minimization of the code-rate. On the other hand, due to the condition that $\COS{i}{n}$ should be selected such that $i \in \COS{r}{2^g-1}$ (Theorem~\ref{theo_CyclicDistThree}) the minimum distance of $\CYCa$ remains three.
%\qed
\end{proof}
\begin{remark}
Let $r=1$ in Proposition~\ref{theo_BadCyclicThree}. Then $\COS{1}{2^g-1} = \{1,2,4,\dots,2^{g-1} \}$ is the cyclotomic coset of a binary Hamming code of length $2^g-1$. The defining set of the corresponding lowest-code-rate binary cyclic code is a repetition of the defining set of the Hamming code of length $2^g-1$.
\end{remark}
\begin{example}[Non-primitive Binary Cyclic Code with Minimum Distance Three and Lowest Code-Rate]
Let us again consider Example~\ref{ex_BinaryCodeDistThree} with $n=119 = (2^3-1) \cdot 17$ and $k=68$. The binary cyclic code of length $n=119$ with generator polynomial $g(x) = \COS{1}{119}(x) \cdot \COS{11}{119}(x) \cdot \COS{51}{119}(x)$  and with minimum distance three has lowest code-rate $R=(2^3-1-3)/(2^3-1)=68/119$.
Its defining set $\defset{\CYCa}$ is:
\begin{align*}
\defset{\CYCa}  =  \{ \square,1,2,\square,4,\square,\square,\; \brokenvert \; \square, 8,9,\square,11,\square,\square,\; \brokenvert \; \square, 15,16,\square,18,\square,\square,\; \brokenvert \;  \square,  22,\dots, 116, \square,\square\}.
\end{align*}
\end{example}
A consequence of Proposition~\ref{theo_BadCyclicThree} is that we do not need to investigate any binary cyclic code of minimum distance three any more. We obtain the same result when we take a primitive binary cyclic code with minimum distance three as non-zero-locator code.

\section{Syndrome-Based Decoding of up to $\lfloor (d^{\ast}-1)/2 \rfloor$ Errors} \label{sec_decoding}
\subsection{Syndrome Definition}
Let a $q$-ary cyclic code \CYC{q}{n}{k}{d} and its associated $\LCq$-ary non-zero-locator code \LC{\LCq}{\LCn}{\LCk}{\LCd} with $\gcd(n,\LCn) = 1$ and the integers $\mulo$ and $\LCconst$ be given as in Definition~\ref{def_locatorcode}.  Let $\F{\LCq} = \F{q^\LCextensionorder}$ be an extension field of \F{q}. Let $\alpha \in \F{q^s}$ be a primitive $n$th and let $\beta \in \F{\LCq^{\LCs}}$ be a primitive $\LCn$th root of unity.
Let $r$ denote the least common multiple of $s$ and $\LCextensionorder \cdot \LCs$.
Let $a(x) = \sum_{i \in \mathcal{Z}} a_i x^i$ be a codeword of $\LCa$ of weight $|\mathcal{Z}| = \LCd$.

Let the set $\mathcal E = \{i_0,i_1,\dots,i_{t-1}\}$ with cardinality $|\mathcal E|=t$ be the set of error positions. The corresponding error polynomial is denoted by $e(x) = \sum_{i \in \mathcal E} e_i x^i$. Let the received polynomial be $r(x) = \sum_{i=0}^{n-1} r_i x^i = e(x) + c(x)$.

We define a syndrome polynomial $S(x) \in \Fx{q^r}$ as follows:
\begin{equation} \label{eq_def_synd}
S(x) \defequiv \sum \limits_{j=0}^{\infty} r(\alpha^{j+\LCconst}) a(\beta^j) x^j \mod x^{\mulo-1}.
\end{equation}
Thus, the coefficients $S_j \in \F{q^r}$ of the above defined syndrome polynomial $S(x) = \sum_{j=0}^{\mulo-2} S_j x^j$ are given by
\begin{equation*} \label{eq_synd_explicit}
S_j = \sum_{i=0}^{n-1} r_i \alpha^{i(j+\LCconst)} \cdot \sum_{h=0}^{\LCn-1} a_{h} \beta^{h j},  \quad \forall j=0,\dots,\mulo-2.
\end{equation*}
From Definition~\ref{def_locatorcode} we know that the syndrome polynomial $S(x)$ of~\refeq{eq_def_synd} is independent of the codeword $c(x)$. Now, we can do the same reformulation of the syndrome expression as we did in Section~\ref{sec_loccode} for the codeword $c(x)$ and $a(x)$. We have from~\refeq{eq_def_synd}:
\begin{align*} 
\sum \limits_{j=0}^{\infty} r(\alpha^{j+\LCconst}) a(\beta^j) x^j & \equiv \sum \limits_{j=0}^{\infty} e(\alpha^{j+\LCconst}) a(\beta^j) x^j \mod x^{\mulo-1} \\
 & \equiv \sum \limits_{j=0}^{\infty} \sum_{i \in \mathcal E}  e_i \alpha^{i(j+\LCconst)} a(\beta^j) x^j  \mod x^{\mulo-1},
 \end{align*}
and with~\refeq{eq_PowerSeriesCyclicCode} for $a(x) = \sum_{i \in \mathcal Z} a_i x^i $ we can write:
\begin{align*} %\label{eq_synd_explicita}
S(x) & \equiv \sum_{i \in \mathcal E}  e_i \alpha^{i\LCconst} \sum\limits_{j \in \mathcal Z} \frac{a_j}{1-x\alpha^i \beta^j} \mod x^{\mulo-1} \\
 & \equiv \sum_{i \in \mathcal E}  e_i \alpha^{i\LCconst} \frac{\sum\limits_{j \in \mathcal{Z}}  \Big(a_j \prod \limits_{\substack{\ell \in \mathcal{Z}\\ \ell \neq j}} (1-x\alpha^i \beta^{\ell}) \Big) }{\prod \limits_{j \in \mathcal{Z}} \big(1-x \alpha^{i} \beta^j \big)} \mod x^{\mulo-1}.
\end{align*}
Finally, we can write for $S(x)$:
\begin{align} \label{eq_synd_explicitb}
S(x) & \equiv \frac{\sum\limits_{i \in \mathcal E} \Big( e_i \alpha^{i\LCconst} \sum\limits_{j \in \mathcal{Z}} \Big(a_j \prod \limits_{\substack{\ell \in \mathcal{Z}\\ \ell \neq j}} (1-x\alpha^i \beta^{\ell}) \Big)
\prod \limits_{\substack{m \in \mathcal{E}\\ m \neq i}} 
\prod \limits_{s \in \mathcal{Z}} (1-x\alpha^m \beta^s) \Big)}{\prod \limits_{i \in \mathcal{E}} \Big( \prod \limits_{j \in \mathcal{Z}} \big(1-x \alpha^{i} \beta^j \big) \Big)} \mod x^{\mulo-1}.
\end{align}
We use this explicit syndrome representation in the next section, where we define an error-locator and an error-evaluator polynomial.

\subsection{Key Equation}
To simplify the notation, let the two polynomials $f(x)$ and $h(x) \in \Fx{q^r}$ be defined as follows:
\begin{align} 
f(x) & \defeq \prod_{j \in \mathcal{Z}} \big(1-x \beta^j \big), \label{eq_def_loc_poly}  \\
h(x) & \defeq \sum_{j \in \mathcal{Z}} \big( a_j \prod_{\substack{\ell \in \mathcal{Z} \\ \ell \neq j}} (1-x \beta^{\ell}) \big) \label{eq_def_eval_poly}.
\end{align}
Due to $\gcd(n,\LCn)=1$ we have $\gcd(f(x\alpha^i),f(x\alpha^j)) = 1, \, \forall i \neq j$ (for the proof, see Lemma~\ref{lemma_appendixcoprimality} in the Appendix) and therefore each of the $n$ polynomials $f(x\alpha^0),$ $f(x\alpha^1),$ $\dots,$ $f(x\alpha^{n-1})$ can be identified by one root.
Let $\kappa \in \mathcal{Z}$. Then, we have $f(\beta^{-\kappa}) = 0$.
Furthermore, let $n$ distinct roots $\gamma_0,\gamma_1,\dots,\gamma_{n-1}$ be defined as:
\begin{equation} \label{eq_rootdefinition}
\gamma_i \defeq \beta^{-\kappa} \alpha^{-i}, \quad i=0,\dots,n-1.
\end{equation}
Then, each $\gamma_i$ is a root of $f(x\alpha^i)$. Note that each polynomial $f(x\alpha^i)$ has $|\mathcal{Z}| = \LCd$ roots, but we need only one of them.

Now, we can define an error-locator polynomial $\Lambda(x) \in \Fx{q^r}$ as:
\begin{equation} \label{eq_ELP}
%\Lambda(x) \defeq  \prod_{i \in \mathcal{E}} \Big( \prod_{j \in \mathcal{Z}} \big(1-x \alpha^{i} \beta^j \big) \Big).
\Lambda(x) \defeq  \prod_{i \in \mathcal{E}} f(x\alpha^i).
\end{equation}
The roots $\gamma_i$ of $\Lambda(x)$ from~\refeq{eq_rootdefinition} tell us where the errors are.
The corresponding error-evaluator polynomial $\Omega(x) \in \Fx{q^r}$ is defined as:
\begin{equation} \label{eq_EEP}
%\Omega(x) \defeq \sum\limits_{i \in \mathcal E} \Big( e_i \alpha^{i\LCconst} \sum\limits_{j \in \mathcal{Z}} \Big(a_j \prod_{\substack{\ell \in \mathcal{Z}\\ \ell \neq j}} (1-x\alpha^i \beta^{\ell}) \Big)
%\prod_{\substack{m \in \mathcal{E}\\ m \neq i}} 
%\prod_{s \in \mathcal{Z}} (1-x\alpha^m \beta^s) \Big).
\Omega(x) \defeq \sum\limits_{i \in \mathcal E } \Big(e_i \alpha^{i\LCconst} h(x\alpha^i) \prod_{\substack{\ell \in \mathcal E\\ \ell  \neq i}} f(x \alpha^{\ell}) \Big).
\end{equation}
We relate the syndrome definition of~\refeq{eq_synd_explicitb}, the error-locator polynomial $\Lambda(x)$ of~\refeq{eq_ELP} and the error-evaluator polynomial $\Omega(x)$ of~\refeq{eq_EEP} in form of a \textit{Key Equation}:
\begin{equation} \label{eq_KeyEquation}
 \begin{split}
 S(x) & \equiv \frac{\Omega(x)}{\Lambda(x)}  \mod x^{\mulo-1}, \; \text{with} \\
\deg \Lambda(x)  = t \cdot \LCd, \quad & \deg \Omega(x) \leq t \cdot \LCd-1 < \deg \Lambda(x).
  \end{split}
\end{equation}
Solving~\refeq{eq_KeyEquation} is similar to the decoding of~\cite{Sugiyama_AMethodOfSolving_1975} and we will not go into details. The Extended Euclidean Algorithm (EEA,\cite{Sugiyama_AMethodOfSolving_1975}) with input polynomial $S(x)$ as defined in~\refeq{eq_def_synd} and the monomial $x^{\mulo-1}$ and an adapted stopping rule can be used to solve~\refeq{eq_KeyEquation} and we obtain $\Lambda(x)$ and $\Omega(x)$.

\subsection{Error Evaluation: A Generalized Forney's Formula}
To determine the $t$ error values  $e_{i_0},e_{i_1},\dots,e_{i_{t-1}}$ from the error-locator polynomial $\Lambda(x)$ and error-locator polynomial $\Omega(x)$, we develop an explicit expression of the error-values (like Forney's formula~\cite{Forney_OndecodingBCHcodes_1965}) in the following.
\begin{proposition}(Error Evaluation)
Let a $q$-ary cyclic code \CYC{q}{n}{k}{d} and its associated non-zero-locator code \LC{\LCq}{\LCn}{\LCk}{\LCd} with $\gcd(n,\LCn) = 1$ and the integers $\mulo$ and $\LCconst$ be given as in Definition~\ref{def_locatorcode}. Let $\alpha \in \F{q^s}$ be a primitive $n$th and let $\beta \in \F{q^{\LCextensionorder \cdot \LCs}}$ be a primitive $\LCn$th root of unity. Let $r$ be the least common multiple of $s$ and $\LCextensionorder \cdot \LCs$.

Furthermore, let $\gamma_0,\gamma_1,\dots,\gamma_{n-1}$ be given as in~\refeq{eq_rootdefinition} and let two polynomials $\Lambda(x)$ and $\Omega(x) \in \Fx{q^{r}}$ be given as in~\refeq{eq_ELP} and~\refeq{eq_EEP}. Then the error values $e_i$ for all $i \in \mathcal E$ are:
\begin{align}
e_i &  = \frac{\Omega(\gamma_i)}{\alpha^{i\LCconst} \cdot h(\gamma_i \alpha^i) \cdot \prod \limits_{\substack{\ell \in \mathcal E\\ \ell \neq i}} f(\gamma_i \alpha^{\ell}) } \nonumber \\
 & = \frac{\Omega(\gamma_i) \cdot f'(\gamma_i \alpha^i) }{\Lambda'(\gamma_i) \cdot \alpha^{i\LCconst} \cdot h(\gamma_i \alpha^i) }.
\end{align}
\end{proposition}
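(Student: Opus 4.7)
The plan is to evaluate $\Omega(x)$ directly at the designated root $\gamma_i$ and observe that only a single term of the defining sum survives, which then isolates $e_i$. First I would verify that $\gamma_i\alpha^i=\beta^{-\kappa}$ is a root of $f$: by~\refeq{eq_def_loc_poly} the product $f(\beta^{-\kappa})=\prod_{j\in\mathcal{Z}}(1-\beta^{j-\kappa})$ contains the vanishing factor at $j=\kappa\in\mathcal{Z}$. Consequently $f(\gamma_i\alpha^i)=0$. In the expression~\refeq{eq_EEP} for $\Omega(x)$, every summand indexed by $j\in\mathcal{E}$ with $j\neq i$ contains the factor $f(x\alpha^i)$ in its product $\prod_{\ell\neq j}f(x\alpha^\ell)$, hence vanishes at $x=\gamma_i$. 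Only the $j=i$ term remains, giving
\begin{equation*}
\Omega(\gamma_i)=e_i\,\alpha^{i\LCconst}\,h(\gamma_i\alpha^i)\prod_{\substack{\ell\in\mathcal{E}\\ \ell\neq i}}f(\gamma_i\alpha^{\ell}),
\end{equation*}
from which the first displayed formula for $e_i$ follows by direct division.

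Next I would check that the denominator does not vanish so the division is legitimate. For $\ell\neq i$ the coprimality result (Lemma~\ref{lemma_appendixcoprimality}, relying on $\gcd(n,\LCn)=1$) forces $f(\gamma_i\alpha^{\ell})\neq 0$. For $h(\gamma_i\alpha^i)=h(\beta^{-\kappa})$, the same vanishing-at-$\beta^{-\kappa}$ trick applied to~\refeq{eq_def_eval_poly} kills every summand except $j=\kappa$, yielding $h(\beta^{-\kappa})=a_\kappa\prod_{\ell\in\mathcal{Z},\,\ell\neq\kappa}(1-\beta^{\ell-\kappa})$; since $\kappa\in\mathcal{Z}$ we have $a_\kappa\neq 0$, and the remaining factors are nonzero because $\beta$ has order $\LCn$ and the exponents $\ell-\kappa$ lie in the range where they cannot be multiples of $\LCn$.

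For the second, Forney-style form I would differentiate $\Lambda(x)=\prod_{j\in\mathcal{E}}f(x\alpha^j)$ by the product rule, noting via the chain rule that the $j$-th summand carries an extra factor $\alpha^j$. Evaluating at $\gamma_i$, every term in which the derivative does not fall on the $f(x\alpha^i)$ factor still contains $f(x\alpha^i)$ and therefore vanishes. What survives is
\begin{equation*}
\Lambda'(\gamma_i)=\alpha^{i}\,f'(\gamma_i\alpha^i)\prod_{\substack{\ell\in\mathcal{E}\\ \ell\neq i}}f(\gamma_i\alpha^{\ell}),
\end{equation*}
which I would solve for the product $\prod_{\ell\neq i}f(\gamma_i\alpha^\ell)$ and substitute into the first formula. (Simplicity of $\gamma_i$ as a root of $f(x\alpha^i)$, guaranteed again by coprimality so that $f'(\gamma_i\alpha^i)\neq 0$, is what makes this substitution legitimate.)

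I do not expect a genuine obstacle: the argument is the same bookkeeping that turns the classical Forney formula out of the BCH key equation. The one subtle point worth stating explicitly is why the coprimality $\gcd(n,\LCn)=1$ is essential here — without it, different factors $f(x\alpha^j)$ of $\Lambda$ could share roots, so $\gamma_i$ might kill more than the intended single factor and the derivative identity above would collapse. Verifying nonvanishing of $f'(\gamma_i\alpha^i)$ and $h(\gamma_i\alpha^i)$ via this coprimality is the only non-mechanical step.
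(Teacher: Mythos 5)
Your approach matches the paper's proof essentially line for line: evaluate $\Omega$ at $\gamma_i$ and observe that every summand except the $i$-th dies because it carries the vanishing factor $f(\gamma_i\alpha^i)$; differentiate $\Lambda$ by the product rule, evaluate at $\gamma_i$ and again only the $i$-th term survives; then substitute. The paper does exactly this. You go a bit further by explicitly checking that the denominators do not vanish ($f(\gamma_i\alpha^\ell)\neq 0$ for $\ell\neq i$ via coprimality, $h(\beta^{-\kappa})\neq 0$ by isolating the $j=\kappa$ term) --- this justification is absent from the paper's proof, and it is worth having.

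There is one point on which your bookkeeping will collide with the statement as printed. You correctly note the chain-rule factor, so your derivative identity reads
\begin{equation*}
\Lambda'(\gamma_i)=\alpha^{i}\,f'(\gamma_i\alpha^i)\prod_{\substack{\ell\in\mathcal{E}\\ \ell\neq i}}f(\gamma_i\alpha^{\ell}),
\end{equation*}
where $f'$ is the ordinary polynomial derivative. Carrying the substitution through then yields $e_i = \Omega(\gamma_i)\,\alpha^{i}\,f'(\gamma_i\alpha^i)\,/\,\bigl(\Lambda'(\gamma_i)\,\alpha^{i\LCconst}\,h(\gamma_i\alpha^i)\bigr)$, which has an extra $\alpha^{i}$ relative to the second display of the proposition. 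The paper's proof writes $\Lambda'(x)=\sum_{i}f'(x\alpha^i)\prod_{\ell\neq i}f(x\alpha^\ell)$ with no $\alpha^i$, which is only correct if $f'(x\alpha^i)$ is read as shorthand for $\tfrac{d}{dx}\bigl[f(x\alpha^i)\bigr]$ rather than $(f')(x\alpha^i)$; the paper is using that shorthand silently, both in the proof and in the final formula. So your chain rule is right and in fact exposes a notational ambiguity in the source: under the standard reading of $f'$, the stated Forney-type formula is missing a factor of $\alpha^{i}$. When you finish the substitution you should either adopt the paper's convention or write the $\alpha^{i}$ explicitly; left unreconciled, your last step would appear not to produce the claimed identity.
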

\begin{proof}
The error-evaluator polynomial $\Omega(x)$ of~\refeq{eq_EEP} evaluated at $\gamma_i$ is explicitly
\begin{equation*}
\Omega(\gamma_i)  = e_i \cdot \alpha^{i\LCconst} \cdot h(\gamma_i\alpha^i) \prod_{\substack{\ell \in \mathcal E\\ \ell \neq i}} f(\gamma_i \alpha^{\ell}).
\end{equation*}
The derivative $\Lambda'(x)$ of the error-locator polynomial is
\begin{equation*}
\Lambda'(x) = \sum_{i \in \mathcal E} \big( f'(x \alpha^i) \prod_{\substack{\ell \in \mathcal E\\ \ell \neq i}} f(x\alpha^{\ell}) \big).
\end{equation*}
Its evaluation at $\gamma_i$ simplifies to
\begin{align*}
\Lambda'(\gamma_i) & = f'(\gamma_i \alpha^i ) \prod_{\substack{\ell \in \mathcal E\\ \ell \neq i}} f(\gamma_i \alpha^{\ell} ).
%&  = \sum_{j \in \mathcal{Y}_{\LCa}} -(\alpha^p \beta)^j \prod_{\substack{\ell \in \mathcal{Y}_{\LCa} \\ \ell \neq p}} \big( 1-\gamma_p\alpha^{\ell} \beta^j \big) \prod_{\substack{\ell \in \mathcal E\\ \ell \neq i}} \prod_{j \in \mathcal{Y}_{\LCa} } (1-\gamma_p \alpha^\ell \beta^j).
\end{align*}
%\qed
\end{proof}
Note that the classical decoding up to the half the BCH bound of a cyclic code $\CYCa$ corresponds to the case where the associated non-zero-locator code \LCa{} is the set of all vectors of length $\LCn=\LCk$ over $\F{\LCq}$. The zero-sequence of length $\mulo-1$ is the longest set of consecutive zeros of $\CYCa$. Then we can choose $a(x)=1$ and we obtain the classical syndrome definition, key equation and Forney's formula.

\section{Conclusion and Outlook} \label{sec_conclusion}
We presented a new technique that uses low-rate cyclic codes with small minimum distances --- so-called non-zero-locator codes --- to bound the minimum distance of $q$-ary cyclic codes.
The algebraic description gives a generalized Key Equation and allows an efficient decoding.
We derived some properties of binary cyclic codes of minimum distance two and three and lowest code-rate.

Future work is to find lowest-code-rate small-minimum-distance non-binary cyclic codes and relate them to our method and bound the minimum distance of other cyclic codes. Combined error-erasure decoding with our proposed method seems to be possible. 

\subsubsection*{Acknowledgments}
We thank the anonymous referees for valuable comments that improved the presentation of this paper.

The authors wish to thank Antonia Wachter-Zeh and Daniel Augot for fruitful discussions.
This work has been supported by German Research Council ``Deutsche Forschungsgemeinschaft'' (DFG) under grant BO~867/22-1.
%\end{acknowledgements}

\section*{Appendix} 
\begin{lemma}[Coprimality of $n$ and $\LCn$] \label{lemma_appendixcoprimality}
Let $[n]$ denote the set of integers $\{ 0,1,\dots,n-1 \}$ and let $\mathcal{Z}$ be a subset of $[\LCn]$.
Let $\alpha$ be an element of order $n$ in $\F{q^s}$ and let $\beta$ denote a primitive element of order $\LCn$ in $\F{\LCq^{\LCs}}$, where $\F{\LCq} = \F{q^\LCextensionorder}$. Let $r$ denote the least common multiple of $s$ and $\LCextensionorder \cdot \LCs$ and let $\gamma$ be a primitive
element in $\F{q^{r}}$. Let $N = q^r-1$. Then $\alpha = \gamma^{N/n}$ and $ \beta = \gamma^{N/\LCn}$.
We consider univariate polynomials in $\F{q^{r}}[x]$. If $\gcd(n,\LCn)=1$ then 
\begin{equation} \label{eq_gcd_codenzl}
 \gcd \Big( \prod_{m \in \mathcal{Z}}  (1-x\alpha^i\beta^m), \prod_{m \in \mathcal{Z}} (1-x\alpha^j\beta^m) \Big) = 1 
\end{equation}
holds $\forall i,j \in [n]$ with $i \neq j$.
\end{lemma}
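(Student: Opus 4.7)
The two polynomials in \eqref{eq_gcd_codenzl} both split into linear factors over $\F{q^r}$ (since $\alpha,\beta$ are nonzero), so their gcd is trivial iff they share no common root. The plan is therefore to identify the root sets explicitly and then use the hypothesis $\gcd(n,\LCn)=1$ to show these sets are disjoint whenever $i\neq j$.

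First I would observe that $(1-x\alpha^i\beta^m)=0$ iff $x=\alpha^{-i}\beta^{-m}$, so
\[
\prod_{m\in\mathcal{Z}}(1-x\alpha^i\beta^m)\quad\text{has root set}\quad \{\alpha^{-i}\beta^{-m}:m\in\mathcal{Z}\}.
\]
Hence a nontrivial common factor exists iff there are $m,m'\in\mathcal{Z}$ with $\alpha^{-i}\beta^{-m}=\alpha^{-j}\beta^{-m'}$, equivalently
\[
\alpha^{\,j-i}=\beta^{\,m-m'}.
\]

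Next I would rewrite this equation inside the common overfield $\F{q^r}$ using $\alpha=\gamma^{N/n}$ and $\beta=\gamma^{N/\LCn}$, where $N=q^r-1$ is the order of $\gamma$. The equation becomes
\[
\gamma^{(j-i)N/n}=\gamma^{(m-m')N/\LCn},
\]
which is equivalent to
\[
(j-i)\LCn\equiv (m-m')\,n\pmod{n\LCn}.
\]
(Here I am using that $n\mid N$ and $\LCn\mid N$ so that $N/n$ and $N/\LCn$ are integers, and dividing the exponent congruence modulo $N$ through by $N/(n\LCn)$, which is valid because $\gcd(n,\LCn)=1$ implies $n\LCn\mid N$.)

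Finally I would reduce this congruence modulo $n$ to get $(j-i)\LCn\equiv 0\pmod n$. Because $\gcd(n,\LCn)=1$, $\LCn$ is invertible modulo $n$, so $j-i\equiv 0\pmod n$. But $i,j\in[n]$ with $i\neq j$ forces $0<|j-i|<n$, a contradiction. Hence no common root exists and \eqref{eq_gcd_codenzl} follows. The only step needing some care is the passage from the exponent identity in $\gamma$ to the integer congruence modulo $n\LCn$; the rest is bookkeeping.
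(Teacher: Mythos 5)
Your proof is correct and follows the same overall strategy as the paper's: identify the roots of each polynomial, observe that a nontrivial common factor forces $\alpha^{j-i}=\beta^{m-m'}$ for some $m,m'\in\mathcal Z$, translate this into an exponent congruence via the primitive element $\gamma$ of $\F{q^r}$, and use $\gcd(n,\LCn)=1$ to derive a contradiction. The only place you diverge is the final arithmetic step: the paper writes $(i-j)\LCn-(m'-m)n=\lambda n\LCn$, argues $\lambda=0$ from size bounds, and then concludes from $\LCn\mid(m'-m)n$ that $m=m'$; you instead reduce the congruence $(j-i)\LCn\equiv(m-m')n\pmod{n\LCn}$ directly modulo $n$ to get $(j-i)\LCn\equiv 0\pmod n$, whence $i\equiv j\pmod n$. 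Your reduction is a bit cleaner and in fact sidesteps a small gap in the paper's presentation: since $m'-m$ may be negative, the size bounds alone only give $\lambda\in\{0,1\}$, not $\lambda=0$, whereas reducing modulo $n$ works uniformly in $\lambda$ and does not require that step at all.
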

\begin{proof}
We show that the contrary does not hold. If $\refeq{eq_gcd_codenzl}$ does not hold, then 
there exist a $i$ and $j$ with $i > j$ and $m,m' \in \mathcal{Z} $  with $m \neq m'$ such that
\begin{align}
\alpha^i \beta^m & = \alpha^j \beta^{m'} \nonumber \\
\alpha^{i-j} & = \beta^{m'-m} \label{eq_app_eq1}
\end{align}
holds.
Let us express \refeq{eq_app_eq1} in terms of $\gamma$. We obtain:
\begin{align*}
\gamma^{\frac{N}{n}(i-j)} & = \gamma^{\frac{N}{\LCn}(m'-m)} \\
\gamma^{\frac{N}{n \cdot \LCn} \big((i-j)\LCn - (m'-m) n \big)} & = 1 \\
\Rightarrow (i-j)\LCn -(m'-m) n & = \lambda \cdot n \cdot \LCn.
\end{align*}
We know that $i-j$ is smaller than $n$ and $m'-m$ is smaller than $\LCn$.
This implies that $\lambda$ is zero.
We have:
\begin{align*}
(i-j)\LCn & = (m'-m)n \\
\Rightarrow \LCn & | (m'-m)n 
\end{align*}
But $(m'-m)<\LCn$ and this implies that $\gcd(n,\LCn) \neq 1$.
%\qed
\end{proof}

%% BibTeX users please use one of
%%\bibliographystyle{spbasic}      % basic style, author-year citations
%\bibliographystyle{spmpsci}      % mathematics and physical sciences
%%\bibliographystyle{spphys}       % APS-like style for physics
%\bibliography{loc_dcc}   % name your BibTeX data base

\end{document}